\documentclass[conference]{IEEEtran}
\IEEEoverridecommandlockouts
% The preceding line is only needed to identify funding in the first footnote. If that is unneeded, please comment it out.
\usepackage{cite}
\usepackage{amsmath,amssymb,amsfonts}
\usepackage{graphicx}
\usepackage{textcomp}
\usepackage{xcolor}
\def\BibTeX{{\rm B\kern-.05em{\sc i\kern-.025em b}\kern-.08em
    T\kern-.1667em\lower.7ex\hbox{E}\kern-.125emX}}
\usepackage{graphicx} % Required for inserting images
\usepackage{subfig}
\usepackage{comment}
\newcommand{\F}{\mathbb{F}}
\newcommand{\X}{\boldsymbol{X}}

\newcommand{\code}{\mathcal{C}}
\newcommand{\Y}{\boldsymbol{Y}}
\newcommand{\x}{\boldsymbol{x}}
\newcommand{\y}{\boldsymbol{y}}
\newcommand{\w}{\boldsymbol{w}}
\newcommand{\bfu}{\boldsymbol{u}}
\newcommand{\bfv}{\boldsymbol{v}}
\newcommand{\bfa}{\boldsymbol{a}}
\newcommand{\bfb}{\boldsymbol{b}}
\newcommand{\bfe}{\boldsymbol{e}}
\newcommand{\bfz}{\boldsymbol{z}}
\newcommand{\z}{\boldsymbol{z}}
\newcommand{\Z}{\boldsymbol{Z}}
\newcommand{\ham}{d_{\text{H}}}
\usepackage{verbatim}
\usepackage[font=small]{caption}
\usepackage{algorithm}
\usepackage{relsize}
\usepackage{amsthm}
\usepackage{algpseudocode}
\usepackage{mathrsfs}
\usepackage{dsfont}
\usepackage{relsize}
\usepackage{enumitem}
\usepackage{hyperref}
\usepackage{xcolor}
\newtheorem{theorem}{Theorem}

\newtheorem{lemma}{Lemma}
\newtheorem{definition}{Definition}
\newtheorem{example}{Example}
\newtheorem{claim}{Claim}

\newtheorem{corollary}{Corollary}

\usepackage{tikz}
\usetikzlibrary{shapes.geometric, arrows}
\theoremstyle{definition}
\newtheorem{construction}{Construction}
\newtheorem{alg}{Algorithm}

\newcommand{\ey}[1]{{
[{\textcolor{red}{#1}} \textcolor{red}{--eitan}]\normalsize}}

\begin{document}

\title{Error-Correcting Codes for the Sum Channel}

\author{\IEEEauthorblockN{Lyan Abboud}
\IEEEauthorblockA{\textit{Computer Science Department,} \\
\textit{Technion—Israel Institute of Technology}\\
Haifa, Israel \\
lyan.abboud@campus.technion.ac.il\vspace{-4ex}}
\and
\IEEEauthorblockN{Eitan Yaakobi}
\IEEEauthorblockA{\textit{Computer Science Department,} \\
\textit{Technion—Israel Institute of Technology}\\
Haifa, Israel \\
yaakobi@cs.technion.ac.il\vspace{-4ex}}

}

\maketitle

\begin{abstract}
We introduce the \emph{sum channel}, a new channel model motivated by applications in distributed storage and DNA data storage. In the error-free case, it takes as input an $\ell$-row binary matrix and outputs an $(\ell+1)$-row matrix whose first $\ell$ rows equal the input and whose last row is their parity (sum) row.
We construct a two-deletion-correcting code with redundancy $2\lceil\log_2\log_2 n\rceil + \log_2 \ell+ O(1)$ for $\ell$-row inputs. When $\ell=2$, we establish a lower bound of $\lceil\log_2\log_2 n\rceil + O(1)$ bits, implying that our redundancy is optimal up to a factor of~2.
 We also present a code correcting a single substitution with $\lceil \log_2(\ell+1)\rceil$ redundant bits and prove that it is within one bit of optimality.
\end{abstract}

\section{Introduction}
In this work, we study the \emph{sum channel}, which takes as input an $\ell$-row binary matrix and, in the error-free case, outputs an $(\ell+1)$-row matrix whose first $\ell$ rows equal the input and whose last row is their parity row, thereby introducing inherent redundancy. We analyze this channel under output errors including bit insertions, deletions, and substitutions, and derive coding schemes and upper bounds on the maximum code size. 

One natural application of the sum channel arises in distributed storage and matches its structure. 
In Redundant Array of Independent Disks Level 5 (RAID~5) \cite{raid,chen1994raid}, a widely used storage layout, data are striped across $\ell$ disks and an additional \emph{parity} stripe is stored as the bitwise XOR of the data stripes; this extra stripe enables recovery from the loss or corruption of a single disk, while practical systems must also cope with bit-level errors and mis-synchronization. 
%To the best of our knowledge, using the parity disk to correct local errors has not been studied.
To the best of our knowledge, local error correction for the RAID~5 single-parity sum structure has not been treated in the coding theory literature.

Another application arises in DNA sequencing and data storage. Storing digital information on DNA involves encoding it into sequences over the DNA alphabet $\{A,C,G,T\}$, synthesizing DNA molecules with the desired sequences, and storing the resulting material. To retrieve the data, the strands undergo \emph{DNA sequencing}, which determines the order of the symbols along the strand, often by incorporating nucleotides one at a time and identifying them through fluorescent signals. The resulting reads are then fed to a decoder to reconstruct the stored information. 
DNA synthesis and sequencing are error-prone, introducing mostly substitutions, insertions, and deletions; therefore, error-correcting codes are often used to ensure reliability~\cite{blawat2016forward}\cite{cai2019optimal}.
An innovative DNA sequencing strategy, termed \emph{ECC sequencing}~\cite{chen2017highly}, suggests reading the same DNA sequence three times using three distinct two-class partitions of $\{A,C,G,T\}$. Each round outputs a binary vector indicating, for each position, the class of the underlying base. The three distinct partitions are:
$\{A,C\}\ \text{vs.}\ \{G,T\},\ 
\{A,G\}\ \text{vs.}\ \{C,T\}$ and $ \{A,T\}\ \text{vs.}\ \{C,G\}$.
The binary vector corresponding to a partition records a bit $0$ if the true base lies in the first class (and $1$ otherwise). 
\begin{figure}[h]
    \vspace{-7pt}
    \centering
    \includegraphics[width=1\linewidth]{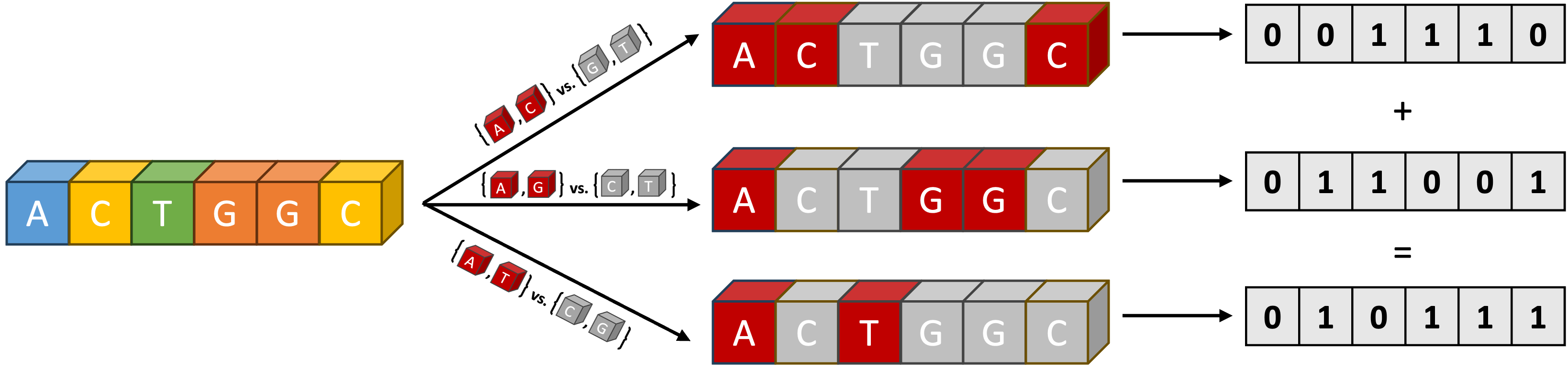}
    \caption{For the sequence `$AGGTC$', the vectors obtained from the first, second, and third partitions are $01110$, $00011$, and $01101$, respectively. It holds that $01110 + 00011 = 01101$.}
    \label{fig:ECC_seq}
    \vspace{-9pt}
\end{figure}
See Figure \ref{fig:ECC_seq} for an illustrative example.
%For example, for the sequence $AGGTC$, the vectors obtained from the first, second, and third partitions are $01110$, $00011$, and $01101$, respectively.
The resulting three (error-free) binary vectors satisfy an XOR-type constraint: one vector equals the bitwise sum (XOR) of the other two. This property is exemplified in Fig. \ref{fig:ECC_seq}. 
% As seen in Figure~\cite{fig:}
This corresponds precisely to the sum-channel structure with $\ell=2$, where two data rows induce a third parity row.
%Concretely, ECC sequencing uses the three distinct partitions:
%$\{A,C\}\ \text{vs.}\ \{G,T\},\ 
%\{A,G\}\ \text{vs.}\ \{C,T\},\ 
%\{A,T\}\ \text{vs.}\ \{C,G\},$
%and records a bit $0$ if the true base lies in the first class (and $1$ otherwise). For example, for the sequence $AGGTC$, the three binary vectors are $01110,\  00011,\  01101,$ and indeed $01110 + 00011 = 01101$.
Note that while two partitions suffice to reconstruct the original strand, a third partition adds redundancy that can be exploited for detecting and correcting sequencing errors.
In~\cite{chen2017highly}, decoding is posed as a probabilistic inference problem and solved by selecting the most likely sequence under a model validated on \emph{natural} genomes. Because synthetic DNA sequences are engineered and may violate these statistical assumptions, the resulting decoding performance may not transfer.
By contrast, in this work, we develop purely coding-theoretic error-correcting codes for the same model, making our framework directly applicable to \emph{synthetic} DNA data storage.

This paper is organized as follows.
Section~\ref{sec:preliminaries} introduces the notation used throughout the paper, defines the sum channel and its error model, and presents the formal formulation of the problem.
In Section~\ref{sec:1,2_D}, we construct a code that corrects two deletions for the sum-channel, for a two-row input matrix and then extend the construction to matrices with an arbitrary number of rows.
We then prove the order-optimality of the construction for a two-row matrix in Section \ref{sec:UB}.
In Section~\ref{sec:1,1_SID}, we present a simple, strictly optimal construction for two-row matrices that corrects a single edit over the sum channel. We then extend it to an explicit construction for binary matrices with any number of rows and show that its redundancy is within one bit of optimal.

\section{Preliminaries and Problem Formulation}
\label{sec:preliminaries}
 
Throughout this work, we denote $ [n] = \{1, \ldots, n\} $. Let $ \mathbb{F}_2 $ be the field of size two.
Denote a binary matrix as $ \X = (\x_1, \ldots, \x_\ell) \in \mathbb{F}_2^{\ell \times n} $, where $ \{\x_i\}_{i=1}^{\ell} $ are binary row vectors of length $n$, which means $ \x_i = (x_{i,1}, \ldots, x_{i,n}) \in \mathbb{F}_2^n $ for any $ i \in [\ell] $. For a vector $\x=(x_1,\ldots,x_n)\in\mathbb{F}_2^n$, we denote by
$\overline{\x} = (1+x_1,\ldots,1+x_n)\in\mathbb{F}_2^n$
its bitwise complement.
We further define ${parity}(\x) \triangleq \left( \sum_{i=1}^n x_i \right) \bmod 2$, 
and the \emph{derivative} of $\x$ by
$\Delta\x = (x_2+x_1,x_3+x_2,\ldots,x_n+x_{n-1}) \in \mathbb{F}_2^{n-1}.$ 
For indices $1\le i\leq  j\le n$, we denote by $\x[i:j]$ 
the contiguous subvector $(x_i,x_{i+1},\ldots,x_{j})$ of $\x$. We use the symbol $\circ$ to denote the \emph{concatenation} of binary vectors. For example, for $\bfu \in \mathbb{F}_2^{n_1}$ and $\bfv \in \mathbb{F}_2^{n_2}$, 
the concatenation $\bfu \circ \bfv \in \mathbb{F}_2^{n_1 + n_2}$ 
is defined as
$\bfu \circ \bfv = (u_1, \ldots, u_{n_1}, v_1, \ldots, v_{n_2}).$
For two binary vectors $\x,\y \in \F_2^n$, the
\emph{Hamming distance} between them is defined as
$\ham(\x,\y)
\triangleq
\left|\{\, i \in [n] : x_i \ne y_i \,\}\right|.$

\begin{definition}
    Let $\X= (\x_1, \ldots, \x_\ell) \in \mathbb{F}_2^{\ell\times n} $. The \textbf{ sum matrix} $ \X^{+} \in \mathbb{F}_2^{(\ell+1) \times n} $ of the matrix $\X$ is defined as the matrix whose first $ \ell $ rows are the rows of $\X$, and whose last row is their bitwise parity. Formally,
    \vspace{-1.1ex}
    $$ 
    \X^{+} = (
        \x_1 , \ 
        \x_2 , 
        \dots,
        \x_\ell ,\ 
        \boldsymbol{z}
    ), \quad \text{where } \boldsymbol{z} = \x_1  +  \x_2  +  \cdots  +  \x_\ell.
    $$
\end{definition}

\begin{example}
    The sum matrix of the matrix $\X = (1100,0000,1010)\in \F_2^{3\times 4}$ is $\X^+ = (1100,0000,1010, 0110)\in \F_2^{4\times 4}$.
\end{example}

Our channel model, referred to as the \emph{sum channel}, takes as input a binary matrix $\X \in \mathbb{F}_2^{\ell \times n}$ with $\ell<n$,
and outputs a binary matrix with $\ell+1$ rows. 
When no errors occur, the output is the \emph{sum matrix} $\X^+ \in \mathbb{F}_2^{(\ell+1)\times n}$ obtained from~$\X$ as defined above.

To formalize the error model, we begin by characterizing errors at the row level. 
The types of errors we consider are: substitutions ($\mathbb{S}$),
deletions ($\mathbb{D}$), insertions ($\mathbb{I}$),
as well as combinations thereof. 
For example, $\mathbb{ID}$ denotes a combination of insertions and deletions (indels), 
and $\mathbb{SID}$ represents substitutions, insertions, and deletions (edits). 
We use $\mathbb{T}$ as a placeholder for any of these error types. 

For a binary vector $\x \in \mathbb{F}_2^n$, let $\mathbb{T}_t(\x)$ denote the set of all vectors that can be obtained from $\x$ by introducing up to $t$ errors of type $\mathbb{T}$.

\begin{definition}
    Let $ \X \in \mathbb{F}_2^{\ell\times n} $, and let $ \X^{ + } \in \mathbb{F}_2^{(\ell+1) \times n} $ be its corresponding sum matrix, where $\ell,n \in \mathbb{N}$. For any integer $ t > 0 $, and an error type $ \mathbb{T} $, the \textbf{$t_{\mathbb{T}}$-error ball} for the sum channel centered at $ \X $, denoted $ B^{\mathbb{T}}_{t}(\X) $, is the set of all binary matrices that can be obtained by introducing up to $ t $ errors of type $ \mathbb{T}$ in $\X^+$. Formally,
\begin{equation*}
\begin{split}
        B^{\mathbb{T}}_{t}(\X)  = \big\{ 
\Y=(\y_1,\dots,\y_{\ell+1}) \ :\ \exists
s_1,\dots,s_{\ell+1}\in\mathbb{N},\\ \sum_{i=1}^{\ell+1} s_i\le t,\ 
\forall i\in[\ell+1],\ \y_i\in\mathbb{T}_{s_i}(\x_i)\ \big\},
\end{split}
\end{equation*}
where $\x_{\ell+1}$ is the bitwise sum of the upper rows.
\end{definition}
The following example illustrates the definitions of $ B^{\mathbb{T}}_{t}$.

\begin{example}
    Consider $\X= (11,10)$. We have $\X^ + = (11,10,01)$.
    The $1_{\mathbb{S}}$-error ball centered at $\X$ is given by $B^{\mathbb{S}}_{1}(\X) =
    \{
    (01,10,01),\ (10,10,01),\
    (11,00,01),\\ (11,11,01),\ 
    (11,10,11),\ (11,10,00) ,\ (11,10,01)
    \}$. 
    %The $(1,2)$-substitution ball centered at $\X$ is 
    %$B^{\mathbb{S}}_{(1,2)}(\X) =B^{\mathbb{S}}_{(1,1)}(\X)\cup 
    %\{   
    %(01,00,01),\ (10,00,01),\ (01,11,01),\\ (10,11,01),\ 
    %(01,10,11),\ (10,10,11),\ (01,10,00),\\ (10,10,00),\ 
    %(11,00,11),\ (11,00,00),\ (11,11,11),\\ (11,11,00)
    %\}.$
\end{example}

\begin{definition}
    For $\ell\in \mathbb{N},\ t>0$, and error type $\mathbb{T}$, a set $\mathcal{C}\subseteq \mathbb{F}_2^{\ell \times n}$ is called a \textbf{$(\ell;t)_\mathbb{T}$-correcting code} if for any $\X_1,\X_2\in \code$, we have 
    $B^{\mathbb{T}}_{t}(\X_1) \cap B^{\mathbb{T}}_{t}(\X_2) = \emptyset.$
     The maximum cardinality of a $(\ell;t)_\mathbb{T}$-correcting code is denoted by $A^\mathbb{T}_n(\ell;t)$, where the optimal redundancy is defined as $n\ell -\lceil\log_2 A^\mathbb{T}_n(\ell;t)\rceil$.
\end{definition}

We next show that the classical deletion--insertion equivalence~\cite{Levenshtein1966} extends to the sum-channel model: any code correcting $t$ deletions also corrects any combination of deletions and insertions of total size at most $t$.

\begin{claim}
Let $a,b\in\mathbb{N}$ with $a+b\le t$.
Any $(\ell;t)_{\mathbb{D}}$-correcting code for the sum channel can also correct any pattern of at most $a$ deletions and $b$ insertions. In particular, a code is $(\ell;t)_{\mathbb{D}}$-correcting if and only if it is $(\ell;t)_{\mathbb{I}}$-correcting.
\end{claim}

\begin{proof}
Assume $\mathcal C$ is $(\ell;t)_{\mathbb D}$-correcting, and suppose toward a contradiction that there exist distinct $\X,\Y\in\mathcal C$ and a matrix $\Z$ that can be obtained from both $\X^+$ and $\Y^+$ using at most $a$ deletions and $b$ insertions in total, where $a+b\le t$.

Write the $k$th rows of $\X^+,\Y^+,\Z$ as $\x_k,\y_k,\z_k$, respectively. For each $k\in[\ell+1]$, let $a_k,b_k$ be the numbers of deletions and insertions used to transform $\x_k$ into $\z_k$, and let $a_k',b_k'$ be the corresponding numbers for $\y_k$. Then
\[
\sum_{k=1}^{\ell+1}(a_k+b_k)\le a+b,\qquad
\sum_{k=1}^{\ell+1}(a_k'+b_k')\le a+b,
\]
and, since both $\x_k$ and $\y_k$ have length $n$ while the resulting row is $\z_k$,
\[
n-a_k+b_k = |\z_k| = n-a_k'+b_k',
\]
which implies
$a_k+b_k' = a_k'+b_k.$

Choose index sets $I_k,J_k\subseteq [|\z_k|]$ such that $|I_k|=n-a_k$, $|J_k|=n-a_k'$, $\z_k|_{I_k}$ is a subsequence of $\x_k$, and $\z_k|_{J_k}$ is a subsequence of $\y_k$. Then
\[
|I_k\cap J_k|
\ge |I_k|+|J_k|-|\z_k|
= n-(a_k+b_k').
\]
Hence $\x_k$ and $\y_k$ admit a common subsequence $\w_k$ of length at least
$n-t_k$, where $t_k\triangleq a_k+b_k'=a_k'+b_k$.
Let $\boldsymbol{W}$ be the matrix whose $k$th row is $\w_k$. Then $\boldsymbol{W}$ can be obtained from both $\X^+$ and $\Y^+$ by deleting at most $t_k$ symbols from row $k$, and
\[
\sum_{k=1}^{\ell+1} t_k
=
\sum_{k=1}^{\ell+1} a_k + \sum_{k=1}^{\ell+1} b_k'
\le a+b \le t.
\]
Therefore,
$\boldsymbol{W}\in B_t^{\mathbb D}(\X)\cap B_t^{\mathbb D}(\Y),$
contradicting the assumption that $\code$ is $(\ell;t)_{\mathbb D}$-correcting.
The converse direction is identical with the roles of insertions and deletions interchanged.
\end{proof}

Hence, it suffices to study deletion errors, since the corresponding insertion and mixed deletion-insertion cases follow immediately.

The sum row already introduces redundancy into the sum channel. The challenge is to exploit this redundancy to build error-correcting codes with as little additional redundancy as possible.

%Note that for all values of $t,\ell$ and $\mathbb{T}\in \{\mathbb{I},\mathbb{D}\}$, the largest $ (t,1,\ell)_{\mathbb{T}} $ correction code includes all matrices in $\mathbb{F}_2^{\ell \times n}$. This is because the erroneous row can be easily identified by comparing the lengths of the rows, and once identified,  it can be recovered by summing the last row of the corresponding sum matrix with all the remaining (non-erroneous) rows. Hence, $A_n^{\mathbb{I}}(t,1,\ell)=A_n^{\mathbb{D}}(t,1,\ell)= 2^{\ell n}$.

\section{$(\ell;2)_{\mathbb{D}}$-Correcting Codes}
\label{sec:1,2_D}
%Let $\ell\in\mathbb{N}$ and let $\X\in\mathbb{F}_2^{\ell\times n}$ be the channel input, with sum matrix $\X^{ + }$. We consider an error model in which the output matrix of the sum-channel is obtained from $\X^{ + }$ by applying at most one deletion in two rows. 
Our goal in this section is to construct 
%codes that enable perfect recovery under this model. That is, we aim to construct
$(\ell;2)_{\mathbb{D}}$-correcting codes, which enable the correction of two deletions in the output matrix of the sum channel.
%We note that all results in this section also apply to the corresponding insertion model.

First, suppose that exactly one row is erroneous. This case is easy to detect by comparing the row lengths. The original input can then be recovered by summing the parity row with all remaining (error-free) rows, yielding the missing row. Hence, the interesting case is when the deletions occur in different rows. We focus on this setting in this section and the next.

A natural approach is to encode each of the $\ell$ input rows independently using a Varshamov-Tenengolts (VT) code~\cite{varshamov1965code}. 
Whenever two rows are erroneous at the output of the sum channel, at least one of them must be among the first $\ell$ rows (which are VT codewords). Thus, we can recover that row via standard VT decoding and consequently reconstruct the original input matrix. 
However, this straightforward approach incurs redundancy of roughly $\ell\log_2 (n+1)$ bits. 
In contrast, we show that the same correction capability can be achieved with redundancy of only $2\lceil\log_2\log_2 n\rceil +\log_2 \ell+ O(1)$ bits.

We first treat the case $\ell=2$, where we present a concrete construction and then leverage this construction to obtain a general scheme for arbitrary $\ell\in\mathbb{N}$.

\subsection{Construction for $\ell=2$}
We will define a new class of codes $\mathcal{C}_{c_1,b_1,c_2,b_2}(n)$, where $0\leq c_1,c_2< 4\lceil\log_2 n\rceil+1,\ b_1,b_2\in \{0,1\}$ and prove that they are indeed $(2,2)_{\mathbb{D}}$-correcting codes. 
We first introduce the two building blocks of the construction.

\paragraph{The $SVT_{c,b}(n,P)$ Code} introduced in~\cite{schoeny2017codes}, is a variant of the
Varshamov-Tenengolts (VT) code~\cite{varshamov1965code} that can correct
a single deletion, provided that the location of the deleted bit is known
to within $P$ consecutive positions.
\vspace{-1.1ex}
\begin{equation*}
\begin{split}
    SVT_{c,b}(n,P)\! =\! \{ \x\hspace{-0.25ex}\in\hspace{-0.25ex} \F_2^n   \hspace{-0.25ex}:\hspace{-0.25ex} \sum_{i=1}^n i  x_i \equiv c \bmod \hspace{-0.25ex} P ,\hspace{-0.25ex} \  parity(\x)\!=\!b \}.
\end{split}
\end{equation*}
As shown in~\cite{schoeny2017codes}, there exist $0 \le c < P$ and
$b \in \F_2$ such that the redundancy of the $SVT_{c,b}(n,P)$ code
is at most $\log_2 P + 1$ bits.

\paragraph{Misalignment Constraint Set}
%$\mathcal{P}^+_\ell(n,k)$ 
For parameters $n,k\in\mathbb{N}$ where $k<n$, define the set $\mathcal{L}(n,k)$, to be the set of pairs $(\bfa,\bfb)\in\F_2^{2\times n}$ that satisfy the following misalignment property: for any two windows $\w_{\bfa},\w_{\bfb}$ of length $k\!+\!1$ whose starting positions differ by at most one, we have
$\w_{\bfa} \neq \w_{\bfb}$ and  $\w_{\bfa} \neq \overline{\w}_{\bfb}$.
That is, we allow two equal length windows in $\bfa,\bfb$ whose starting positions differ by at most one to be equal or complement if and only if their length is at most $k$.
In our constructions we enforce this misalignment constraint by comparing \emph{derivatives} rather than the windows themselves: if $\Delta \w_{\bfa} \neq \Delta \w_{\bfb}$ (where derivatives have length $k$), then necessarily $\w_{\bfa}$ and $\w_{\bfb}$ are neither identical nor complements.
Formally, the $\mathcal{L}(n,k)$ is defined as follows:
    \begin{equation*}
\begin{split}
    \mathcal{L}(n,k) = \Big\{& ( \bfa, \bfb  )\! \in \!\mathbb{F}_2^{2\times n} : \ \forall i\in\!  [n\! -\! k],\ \delta\in\! \{ -1,0,1\}, \\ & 
    \Delta\bfa[i:i\!+\!k\!-\!1] \neq \Delta\bfb[i\!+\!\delta:i\!+\!k\!-\!1+\delta]
    \Big\}.
\end{split}
\end{equation*}

\begin{comment}
    \begin{equation*}
\begin{split}
    \mathcal{L}(n,k) = \Big\{& ( \bfa, \bfb  ) \in \mathbb{F}_2^{2\times n} : \ \forall i\in  [n- k],\\ & 
\Delta\bfa[i:i+k-1] \notin \big\{ \Delta\bfb[i-1:i+k-2],\\ & \Delta\bfb[i:i+k-1],\ \Delta\bfb[i+1:i+k] \big\} 
\Big\}.
\end{split}
\end{equation*}
\end{comment}

\begin{example}
    Consider the vectors $\bfa = 1110110,\   \bfb = 1010010\in \F_2^7$.
    Their derivatives are $\Delta \bfa = 001101$ and $\Delta \bfb = 111011$.
    Every window of length $5$ in $\Delta \bfa$ does not appear in $\Delta \bfb$
    either in the same starting position or with a shift of one index. Hence,
    $(\bfa,\bfb) \in \mathcal{L}(7,5)$. On the other hand,
    $(\bfa,\bfb) \notin \mathcal{L}(7,4)$, since 
    $\Delta\bfa[3:6] = 1101 = \Delta\bfb[2:5]$. 
\end{example}
\begin{comment}
    \begin{example}
    Consider the vectors $\bfa = 0101110$ and $\bfb = 1110100$ in $\F_2^7$.
    Their derivatives are $\Delta \bfa = 111001$ and $\Delta \bfb = 001110$.
    Every window of length $3$ in $\Delta \bfa$ does not appear in $\Delta \bfb$
    either in the same starting position or with a shift of one index. Hence,
    $(\bfa,\bfb) \in \mathcal{L}(7,3)$. On the other hand,
    $(\bfa,\bfb) \notin \mathcal{L}(7,2)$, since 
    $\Delta\bfa[2:3] = 11 = \Delta\bfb[3:4]$. 
\end{example}
\end{comment}

Using the definition of $\mathcal{L}(n,k)$, we introduce the global misalignment constraint set $\mathcal{P}^+_\ell(n,k)$, for $\ell \in \mathbb{N}$.
The set $\mathcal{P}^+_\ell(n,k)$ consists of all binary matrices with $\ell$ rows and $n$ columns such that any two rows of their sum matrix satisfy the misalignment property defined by $\mathcal{L}(n,k)$.
Formally,
\begin{equation*}
\begin{split}
    \mathcal{P}^+_\ell (n,k)\ =\ &\{ \X = (\x_1,\dots \x_\ell) \in \F_2^{\ell \times n} \ : \\ &
\forall i_1, i_2 \in [\ell+1],\ i_1\neq i_2,
        (  \x_{i_1},  \x_{i_2}) \in \mathcal{L}(n,k) \},
\end{split}
\end{equation*}
where $\x_{\ell+1} = \x_1 + \dots + \x_\ell$.

%Intuitively, the constraints reduce alignment ambiguity, allowing correction of two deletions when their column positions are more than $k$ apart.

% We next state a lemma used in Algorithm~\ref{decoding_alg}, showing that the input constraint set prevents ambiguity for two-bit insertions in two positions of distance larger than $k$: the two insertion possibilities do not yield the same bitwise-sum output.

%This global misalignment constraint reduces ambiguity when the deletions are far apart: for deletion positions at distance larger than $k$, there is no alternative way to place the two deletions that yields the same received parity row, and hence the same observed output.

This global misalignment constraint reduces ambiguity when the two deletions are far apart. In particular, it rules out inputs for which two distinct correction patterns, with deletion positions at distance greater than $k$, yield the same parity row. Lemma~\ref{sum_amb} formalizes this.

\begin{lemma}\label{sum_amb}
    %Let $\bfa,\bfb\in \F_2^n$ and let $\w_{\bfa}, \w_{\bfb}$ be two windows of length $k\!+\!1$ in $\bfa,\bfb$, respectively, whose starting positions differ by at most one. If for some $u,v \in\F_2$ we have ....
    %then, $(\bfa,\bfb)\not\in \mathcal{L}(n,k)$.
    Let $\w_1,\w_2\in\F_2^{k+1}$ and $v_1,v_2\in\F_2$ satisfy
    \vspace{-1.1ex}
    $$
    (v_1\circ \w_1)+(\w_2\circ v_2)=(\w_1\circ v_1)+(v_2\circ \w_2).
    \vspace{-1.1ex}
    $$
    Consider any $\bfa,\bfb\in\F_2^n$ such that $\w_1$ occurs in $\bfa$ and $\w_2$ occurs in $\bfb$, with starting indices that differ by at most one. Then $(\bfa,\bfb)\notin\mathcal{L}(n,k)$.
\end{lemma}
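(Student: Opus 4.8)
The plan is to read the vector identity off coordinate by coordinate in $\F_2^{k+2}$ and show that its \emph{interior} coordinates already force $\Delta\w_1=\Delta\w_2$; the failure of the misalignment property then follows by translating window occurrences in $\bfa,\bfb$ into windows of their derivatives.

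First I would expand the four concatenations explicitly. Writing $\w_1=(w_{1,1},\dots,w_{1,k+1})$ and $\w_2=(w_{2,1},\dots,w_{2,k+1})$, the vector $v_1\circ\w_1$ has $j$-th entry $w_{1,j-1}$ for $2\le j\le k+2$, while $\w_1\circ v_1$ has $j$-th entry $w_{1,j}$ for $1\le j\le k+1$; the $\w_2$ terms are handled symmetrically. Comparing the two sides at each interior coordinate $j\in\{2,\dots,k+1\}$ yields $w_{1,j-1}+w_{2,j}=w_{1,j}+w_{2,j-1}$, which after rearranging over $\F_2$ is exactly $(\Delta\w_1)_{j-1}=(\Delta\w_2)_{j-1}$. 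Since $j-1$ ranges over $\{1,\dots,k\}$ and both derivatives have length $k$, these $k$ interior coordinates give precisely $\Delta\w_1=\Delta\w_2$. The two boundary coordinates produce additional relations tying $v_1,v_2$ to the end symbols of $\w_1,\w_2$, but I would not need them.

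Next I would pass to the host vectors. Let $p$ be a starting index at which $\w_1$ occurs in $\bfa$, so $\bfa[p:p+k]=\w_1$, and let $q$ be a starting index of $\w_2$ in $\bfb$, with $q=p+\delta$ for some $\delta\in\{-1,0,1\}$. Using the standard fact that the derivative of a contiguous length-$(k+1)$ window equals the corresponding length-$k$ window of the global derivative, I obtain $\Delta\w_1=\Delta\bfa[p:p+k-1]$ and $\Delta\w_2=\Delta\bfb[q:q+k-1]$. Combining these with $\Delta\w_1=\Delta\w_2$ gives $\Delta\bfa[p:p+k-1]=\Delta\bfb[p+\delta:p+\delta+k-1]$, which is exactly a forbidden equality in the definition of $\mathcal{L}(n,k)$ with $i=p$ and this $\delta$; note $p\in[n-k]$ automatically, since the length-$(k+1)$ window $\w_1$ fits in $\bfa$ starting at $p$. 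Hence $(\bfa,\bfb)\notin\mathcal{L}(n,k)$, as claimed.

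The computation is essentially bookkeeping, so I anticipate no serious obstacle; the only point demanding care is the index alignment, namely matching a length-$(k+1)$ window in $\bfa,\bfb$ to a length-$k$ window in $\Delta\bfa,\Delta\bfb$ and correctly carrying the shift $\delta$ so that the two matched derivative windows land at offsets differing by at most one. The single genuine observation, as opposed to routine algebra, is that the telescoping interior equations encode the coordinatewise equality of the two derivatives.
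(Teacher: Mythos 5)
Your proof is correct and takes essentially the same approach as the paper's: both compare the interior coordinates of the concatenation identity to obtain the telescoping relations $\w_1[i]+\w_2[i+1]=\w_1[i+1]+\w_2[i]$, rearrange them into $\Delta\w_1=\Delta\w_2$, and then invoke the definition of $\mathcal{L}(n,k)$ via the occurrence positions in $\bfa,\bfb$. The additional index bookkeeping you spell out (the window-to-derivative correspondence and the fact that the starting position automatically lies in $[n-k]$) is detail the paper leaves implicit, but it is the same argument.
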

\begin{proof}
    Assume $\w_1,\w_2\in\F_2^{k+1}$ satisfy the stated condition. We have $\forall i\in[k]$, 
    $\w_1[i] +  \w_2[i+1]= \w_1[i+1] +  \w_2[i] .
    $ Then, $
    \w_1[i] +  \w_1[i+1] = \w_2[i] +  \w_2[i+1],$
    which implies that $\forall i\in[k], \quad(\Delta \w_1)[i]= (\Delta \w_2)[i], $ therefore, $
    \Delta \w_1[1:k]= \Delta \w_2[1:k].$ 
    By the definition of $\mathcal{L}(n,k)$, if $\w_1$ appears in $\bfa$ and $\w_2$ appears in $\bfb$ with starting indices differing by at most one, then $(\bfa,\bfb)\notin\mathcal{L}(n,k)$.

\end{proof}

\begin{example}
    Consider the two windows $\w_1 = 0110,\ \w_2 = 1001$ in the vectors $\bfa = 1110110= 111\circ \w_1,\   \bfb = 1010010= 10\circ \w_2 \circ0 $. We have that the starting positions of the two windows $\w_1,\w_2$ differ by one index.
    Consider the bits $v_1= 1,\ v_2= 0$.
    We have that $(v_1 \circ \w_1)  +  (\w_2\circ v_2)  = 10110\: + \: 10010 = 00100$.
    Also, $(\w_1\circ v_1)  +  (v_2 \circ\w_2) = 01101\: +\: 01001 = 00100.$ 
    Note that $\Delta\w_1 = 101 = \Delta \w_2$. This implies that $\bfa,\bfb\not\in \mathcal{L}(7,3)$.
\end{example}

We now want to calculate the size of the global misalignment constraint set. We first calculate the size of the set $\mathcal{L}(n,k)$.

\begin{lemma}\label{L_size}
    For $n,k\in \mathbb{N}$, $|\mathcal{L}(n,k)|\ge 2^{2n}\cdot \big(1-\frac{3(n-k)}{2^k}\big).$
\end{lemma}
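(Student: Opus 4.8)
The plan is to bound the number of \emph{bad} pairs---those $(\bfa,\bfb)\in\F_2^{2\times n}$ that violate the misalignment condition for at least one index and shift---and subtract this count from the total $2^{2n}$. By definition, a pair is bad precisely when there exist $i\in[n-k]$ and $\delta\in\{-1,0,1\}$ (with both windows in range) such that $\Delta\bfa[i:i+k-1]=\Delta\bfb[i+\delta:i+k-1+\delta]$. I will estimate the number of bad pairs by a union bound over the at most $3(n-k)$ index--shift choices.

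The key step is to count, for a \emph{fixed} $\bfb$ and a fixed pair $(i,\delta)$, how many $\bfa\in\F_2^n$ satisfy the single equality $\Delta\bfa[i:i+k-1]=\Delta\bfb[i+\delta:i+k-1+\delta]$. Here the right-hand side is a fixed vector $\w\in\F_2^k$, so the condition is the affine system $a_{j+1}+a_j=w_{\,j-i+1}$ for $j=i,\dots,i+k-1$. These $k$ equations are linearly independent (each one introduces the fresh variable $a_{j+1}$), so their solution set inside $\F_2^n$ has dimension exactly $n-k$; that is, exactly $2^{n-k}$ vectors $\bfa$ satisfy it. Note that Lemma~\ref{sum_amb} already guarantees that a violation of the derivative condition is exactly what makes the pair fail the misalignment property, so counting derivative collisions is the right target.

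Summing over the at most $3(n-k)$ choices of $(i,\delta)$ yields at most $3(n-k)\,2^{n-k}$ bad $\bfa$ for each fixed $\bfb$, and summing this over all $2^n$ choices of $\bfb$ bounds the total number of bad pairs by $3(n-k)\,2^{2n-k}$. Hence $|\mathcal{L}(n,k)|\ge 2^{2n}-3(n-k)\,2^{2n-k}=2^{2n}\bigl(1-\tfrac{3(n-k)}{2^k}\bigr)$, as claimed.

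I do not expect a genuine obstacle here, since the argument is a standard complement count plus union bound; the only points needing care are (i) confirming the linear independence of the $k$ derivative equations, so that the per-event solution count is \emph{exactly} $2^{n-k}$ rather than merely an upper bound, and (ii) observing that boundary choices of $(i,\delta)$ for which a window falls outside $[1,n-1]$ simply contribute no constraint, so including all $3(n-k)$ pairs only overestimates the number of bad pairs and keeps the final inequality valid.
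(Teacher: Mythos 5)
Your proof is correct and takes essentially the same approach as the paper: a union bound over the $3(n-k)$ aligned index--shift choices, with each single derivative-window collision occurring for exactly a $2^{-k}$ fraction of the other vector (the paper phrases this probabilistically over a random $\bfb$ with $\bfa$ fixed, while you count solutions of the linear system with $\bfb$ fixed, which is the same calculation). One minor note: the citation of Lemma~\ref{sum_amb} is unnecessary, since membership in $\mathcal{L}(n,k)$ is \emph{defined} by the derivative-window condition, so counting derivative collisions is the right target by definition alone.
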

\begin{proof}
    Fix $\bfa \in \F_2^n$.
    We will establish an upper bound for the number of vectors $\bfb$ such that $(\bfa,\bfb)$ does not belong to $\mathcal{L}(n,k)$.

    We will calculate the probability that there exists an index $i\in [n-k]$ such that the window $\Delta\bfa[i:i\!+\!k\!-\!1]$ appears in a uniformly random vector $\Delta\bfb$, at one of the aligned positions $i-1,i,i+1$. 
    Each such match occurs with probability $2^{-k}$. By the union bound over the three aligned positions, the probability of a match for a single window is at most $3\cdot 2^{-k}$.
    Since there are $n-k$ possible window positions in $\Delta\bfa$, using union bound, the probability that
    a collision occurs in some window is at most $3(n-k)/2^k$. Then
    the probability that no collision occurs is at least $1-3(n-k)/2^k$.
\end{proof}

\begin{comment}
\begin{lemma} \label{P_l-size}
    For $n,k,\ell\! \in\! \mathbb{N}$,
    $| \mathcal{P}_\ell^+ (n,k)| \ge 2^{n\ell} \big(1-\frac{ 3(n-k)}{2^k}\big)^{\binom{\ell+1}{2}} $. 
    $$2^{n\ell}\left(1-\binom{\ell+1}{2}\frac{3(n-k)}{2^k}\right)$$
\end{lemma}
\begin{proof}
    The probability that a pair $(\bfa,\bfb)\in \mathcal{L}(n,k)$ is $p={|\mathcal{L}(n,k)|}/{2^{2n}}$. By the definition of $\mathcal{P}^+_\ell$ there are $\binom{\ell+1}{2}$ independent pairwise constraints. Hence, the probability that a matrix $\X \in \mathcal{P}_\ell^+ (n,k)$ is bounded from below by $p^{\binom{\ell+1}{2}}$. By Lemma~\ref{L_size}, we have $p= 1-\frac{3(n-k)}{2^k}.$
\end{proof}
\end{comment}

\begin{lemma}\label{P_l-size}
For $n,k,\ell\!\in\!\mathbb{N}$,
$|\mathcal{P}_\ell^+\!(n,k)|
\ge
2^{n\ell}\left(1\!-\!\binom{\ell+1}{2}\frac{3(n-k)}{2^k}\right).$
\end{lemma}
\begin{proof}
Let $\X$ be chosen uniformly at random from $\F_2^{\ell\times n}$, and let
$\x_{\ell+1}=\sum_{i=1}^{\ell}\x_i$ denote its parity row.
For each unordered pair $\{s,t\}\subseteq[\ell+1]$, define the bad event
$E_{s,t}\triangleq \{(\x_s,\x_t)\notin\mathcal{L}(n,k)\}.$
Then $\X\in\mathcal{P}_\ell^+(n,k)$ if and only if none of the events
$\{E_{s,t}\}_{\{s,t\}\subseteq[\ell+1]}$ occurs.

For every pair $\{s,t\}\subseteq[\ell+1]$, the pair $(\x_s,\x_t)$ is uniformly distributed over $\F_2^n\times\F_2^n$.
Indeed, this is immediate when $s,t\in[\ell]$, and if, say, $t=\ell+1$, then
$\x_{\ell+1}=\x_s+\sum_{i\in[\ell]\setminus\{s\}}\x_i,
$
where $\sum_{i\in[\ell]\setminus\{s\}}\x_i$ is uniform over $\F_2^n$ and independent of $\x_s$.
Hence, by Lemma~\ref{L_size},
\[
\Pr[E_{s,t}]
=
1-\frac{|\mathcal{L}(n,k)|}{2^{2n}}
\le
\frac{3(n-k)}{2^k}.
\]

Since there are $\binom{\ell+1}{2}$ such pairs, the union bound gives
\[
\Pr\bigl[\X\notin\mathcal{P}_\ell^+(n,k)\bigr]
\le
\binom{\ell+1}{2}\frac{3(n-k)}{2^k}.
\]
Therefore,
\[
\Pr\bigl[\X\in\mathcal{P}_\ell^+(n,k)\bigr]
\ge
1-\binom{\ell+1}{2}\frac{3(n-k)}{2^k},
\]
and multiplying by the total number $2^{n\ell}$ of matrices in $\F_2^{\ell\times n}$ yields the claim.
\end{proof}

The following corollary is obtained by evaluating the bound in Lemma~\ref{P_l-size} at $k=4\lceil\log_2 n\rceil$ and using the fact that $\ell<n$.
\begin{corollary}\label{cor:P_l}
    $| \mathcal{P}_\ell^+ (n,4\lceil\log_2 n\rceil)|\ge 2^{n\ell}\left(1-\frac{3}{n}\right)$.
\end{corollary}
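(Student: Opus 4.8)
The plan is to obtain Corollary~\ref{cor:P_l} as a direct evaluation of Lemma~\ref{P_l-size} at $k=\lceil\log_2 n\rceil+c$, followed by two elementary monotonicity estimates: one that replaces the base $1-\tfrac{3(n-k)}{2^k}$ by the cleaner quantity $1-\tfrac{3}{2^c}$, and one that inflates the exponent from $\binom{\ell+1}{2}$ up to $\ell^2$. Starting from
\[
|\mathcal{P}_\ell^+(n,k)|\ \ge\ 2^{n\ell}\Big(1-\tfrac{3(n-k)}{2^k}\Big)^{\binom{\ell+1}{2}},
\]
both steps only weaken the right-hand side, so the chain of inequalities lands on the claimed bound.

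For the first estimate I would control the ``defect'' term $\tfrac{3(n-k)}{2^k}$. Since $2^k=2^{\lceil\log_2 n\rceil}\cdot 2^c\ge n\cdot 2^c$ and $n-k\le n$, it follows that $\tfrac{3(n-k)}{2^k}\le \tfrac{3n}{n\,2^c}=\tfrac{3}{2^c}$, and hence $1-\tfrac{3(n-k)}{2^k}\ge 1-\tfrac{3}{2^c}$. For the second estimate I would note that for every $\ell\ge 1$ one has $\binom{\ell+1}{2}=\tfrac{\ell(\ell+1)}{2}\le \ell^2$, and that in the relevant regime $c\ge 2$ the base satisfies $0\le 1-\tfrac{3}{2^c}<1$; because a number in $[0,1]$ only decreases when raised to a larger power,
\[
\Big(1-\tfrac{3(n-k)}{2^k}\Big)^{\binom{\ell+1}{2}}\ \ge\ \Big(1-\tfrac{3}{2^c}\Big)^{\binom{\ell+1}{2}}\ \ge\ \Big(1-\tfrac{3}{2^c}\Big)^{\ell^2}.
\]
Multiplying through by $2^{n\ell}$ yields the corollary.

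I expect no genuine obstacle here; the entire statement is a one-line specialization of Lemma~\ref{P_l-size} together with the facts $2^{\lceil\log_2 n\rceil}\ge n$ and $\binom{\ell+1}{2}\le\ell^2$. The only point requiring a moment's care is keeping the base non-negative, which is exactly what ensures that both monotonicity steps (replacing the base and enlarging the exponent) preserve the direction of the inequality; this is guaranteed whenever $c$ is taken large enough that $1-\tfrac{3}{2^c}\in(0,1)$, which is the only regime in which the bound is informative and in which it is subsequently applied.
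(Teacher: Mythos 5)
Your proposal is correct and follows the same route as the paper, which simply states that the corollary ``is obtained by evaluating the bound in Lemma~\ref{P_l-size} at $k=\lceil\log_2 n\rceil+c$''; your two monotonicity steps (using $2^{\lceil\log_2 n\rceil}\ge n$, $n-k\le n$, and $\binom{\ell+1}{2}\le \ell^2$ with a base in $[0,1)$) are exactly the details left implicit there. Your remark that the base must be nonnegative (i.e., $c\ge 2$) is a fair observation; the paper only ever applies the corollary with $c=3$, so this regime assumption is consistent with its use.
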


With these definitions, we now specify the code construction $\mathcal{C}_{c_1,b_1,c_2,b_2}(n)$ and prove its $(2;2)_{\mathbb{D}}$-correcting property. 

\begin{construction}\label{first_const}
    For $0\leq c_1,c_2< 4\lceil\log_2 n\rceil {+1},\ b_1,b_2\in \{0,1\}$, we define:
    \begin{equation*}
    \begin{split}
        \mathcal{C}_{c_1,b_1,c_2,b_2}(n)
        = \Big\{&
            \X = (\x_1, \x_2) \in \mathbb{F}_2^{2 \times n} :\\ &
             \forall i \in [2],\ \x_i \in SVT_{c_i,b_i}(n, 4\lceil\log_2 n\rceil {+1})
             ,\\ & 
            \X \in \mathcal{P}^{+}_2( n, 4\lceil\log_2 n\rceil)
          \Big\}.
    \end{split}
    \end{equation*}
    That is, $\mathcal{C}_{c_1,b_1,c_2,b_2}(n)$ consists of all matrices $\X$ where $i$-th row belongs to an $SVT_{c_i,b_i}(n, 4\lceil\log_2 n\rceil\!+\!1)$ code, and the sum matrix $\X^{+}$ satisfies the global misalignment constraint. % specified by $\mathcal{L}$.
\end{construction}

\begin{theorem}\label{main_thm} 
The code
$\mathcal{C}_{c_1,b_1,c_2,b_2}(n)$ is a $(2;2)_{\mathbb{D}}$-correcting code.
\end{theorem}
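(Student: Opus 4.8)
The plan is to prove the contrapositive of the code property: if two codewords $\X=(\x_1,\x_2)$ and $\X'=(\x_1',\x_2')$ of $\mathcal{C}_{c_1,b_1,c_2,b_2}(n)$ share a common output $\Y=(\y_1,\y_2,\y_3)\in B^{\mathbb{D}}_2(\X)\cap B^{\mathbb{D}}_2(\X')$, then $\X=\X'$. Since the length $n-|\y_i|$ of each received row fixes the number of deletions it suffered, both $\X$ and $\X'$ realize the same per-row deletion pattern, with $\sum_i(n-|\y_i|)\le 2$. First I would dispose of the easy pattern in which a single row absorbs both deletions: then the other two rows of $\X^{+}$ and $\X'^{+}$ are error-free, hence equal, and since for $\ell=2$ any row of the sum matrix is the sum of the other two, the third row agrees as well, giving $\X^{+}=\X'^{+}$ and thus $\X=\X'$. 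The substance of the theorem is the remaining pattern, where two distinct rows $i_1,i_2$ each lose one bit while the third row $i_3$ is clean, so that $\x_{i_3}=\x_{i_3}'$ and, using $\x_1+\x_2+\x_3=0$, one gets $\x_{i_1}+\x_{i_1}'=\x_{i_2}+\x_{i_2}'=:\bfu$.

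Next I would argue that if $\X\ne\X'$ then $\bfu\ne 0$ and both erroneous rows genuinely differ, $\x_{i_1}\ne\x_{i_1}'$ and $\x_{i_2}\ne\x_{i_2}'$ (if one pair coincided, $\bfu=0$ would force the other to coincide, contradicting $\X\ne\X'$). Two length-$n$ strings that collapse to the same $\y_{i_1}$ under one deletion differ only inside the interval spanned by their two deletion positions $p_{i_1}<p_{i_1}'$, where the difference has the canonical one-bit-shift form: $\x_{i_1}$ carries a segment $v_1\circ\w_1$ exactly where $\x_{i_1}'$ carries $\w_1\circ v_1$, the gap $p_{i_1}'-p_{i_1}$ equalling the window length $|\w_1|$, and $v_1$ the common deleted-bit value fixed by the shared parity $b_{i_1}$ whenever $i_1$ is a data row. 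The same holds for row $i_2$ with a window $\w_2$ and a bit $v_2$. Because $\x_{i_1}+\x_{i_2}=\x_{i_1}'+\x_{i_2}'$, the two shift windows must occupy essentially the same coordinates, and writing out the local identity yields precisely $(v_1\circ\w_1)+(\w_2\circ v_2)=(\w_1\circ v_1)+(v_2\circ\w_2)$, the hypothesis of Lemma~\ref{sum_amb}, with the occurrences of $\w_1$ in $\x_{i_1}$ and $\w_2$ in $\x_{i_2}$ starting within one coordinate of each other.

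From here the argument splits on the common gap $g:=p_{i_1}'-p_{i_1}=p_{i_2}'-p_{i_2}$. If $g\ge k+1$ (with $k=\lceil\log_2 n\rceil+c$), I would pass to length-$(k+1)$ sub-windows still obeying the identity above and invoke Lemma~\ref{sum_amb} to conclude $(\x_{i_1},\x_{i_2})\notin\mathcal{L}(n,k)$, contradicting $\X\in\mathcal{P}^{+}_2(n,k)$; hence necessarily $g\le k$. But $g\le k$ places the two candidate deletion positions of each erroneous row inside a block of $P=k+1$ consecutive coordinates. Among the two erroneous rows at least one is a data row (rows $1$ and $2$ carry the SVT constraints, and any two of the three rows include at least one of them), and on that row the single-deletion-with-localization guarantee of $SVT_{c_i,b_i}(n,P)$ forces the two candidate preimages to coincide; the parity relation together with the clean row $\x_{i_3}$ then forces the remaining row to coincide as well, so $\X=\X'$. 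The main obstacle is precisely this middle step: converting a large positional gap into a bona fide length-$(k+1)$ window pair that satisfies Lemma~\ref{sum_amb}. This requires pinning down the one-bit-shift structure of competing single-deletion reconstructions, correctly tracking the at-most-one alignment offset between the two rows (the $\delta\in\{-1,0,1\}$ of $\mathcal{L}(n,k)$), handling the boundary positions, and matching the window length to the SVT modulus $P=k+1$ so that the misalignment regime and the SVT regime together cover all gaps.
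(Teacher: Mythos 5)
Your proposal is correct and is, in substance, the paper's own argument recast in contrapositive form: the same dichotomy between nearby and far-apart deletion locations, resolved by the same two ingredients --- the localized-deletion guarantee of the $SVT_{c_i,b_i}(n,\lceil\log_2 n\rceil+c+1)$ codes in the nearby regime, and Lemma~\ref{sum_amb} together with the misalignment constraint $\mathcal{P}^{+}_{2}(n,\lceil\log_2 n\rceil+c)$ in the far regime. The paper packages this as correctness of the explicit decoder in Algorithm~\ref{decoding_alg}, whose scan indices $\widehat{j}_1,\widehat{j}_2$ play exactly the role of your gap $g$, while you phrase it as disjointness of the deletion balls; the underlying case analysis and key lemma are the same.
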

To prove the theorem, we show that Algorithm~\ref{decoding_alg} correctly decodes any received output of the sum channel corrupted by two deletions.
The algorithm distinguishes between two cases according to the distance between the affected columns: either the two deletions occur in columns that are far apart, or the two affected columns lie within an interval of length $4\log n+1$. In the first case, the global misalignment constraint guarantees that there is a unique way to correct the received matrix so that the sum relation is restored. In the second case, there may be more than one possible correction. However, since one deletion in a row is known to lie within an interval of length $4\lceil\log n\rceil+1$, the corresponding $SVT$ decoder can be used to identify the correct one.
%, for any fixed choice of parameters $0 \leq c_1,c_2< \lceil\log_2 n\rceil{  +c+1}$ and $b_1,b_2\in \{0,1\}$.

%Before detailing the algorithm, we note that the exact position of a deleted bit within a run is irrelevant. Hence, deletions are treated at the run level rather than by explicit bit indices.
%
% Also note that in a correctly aligned configuration, every position in the output matrix $\Y=(\y_1,\y_2,\y_3)$ satisfies $\y_1[i] \! +\!  \y_2[i] \!= \!\y_3[i]$. A single deletion in one of the rows misaligns the vectors, causing the sum relation to fail in the subsequent run. Similarly, when scanning from right to left, the same reasoning applies in reverse. 

\begin{alg}\label{decoding_alg} 
    Let $\Y=(\y_1,\y_2, \y_3)$, be the output matrix of the sum-channel. Perform the following steps:
    \begin{enumerate}[label={}, leftmargin=*]
    
    \item \textbf{Step 1: Rows detection.}
    Compare the lengths of the three rows and identify the shorter ones, whose indices are denoted by $i_1,i_2$. Denote the error-free row index by $i_3$.
    If exactly one row is shorter, recover it by summing the other two rows, and return the corrected matrix.

    \item \textbf{Step 2: Columns detection.}
    %Detect the runs $r_1,r_2$ in which the deletions occur by 
    Examine the columns of $\Y$ in two scans:
    \begin{enumerate}
        \item
        Sequentially check each triplet
        $(\y_1[j],\y_2[j],\y_3[j])$ for $j=1,\ldots,n-1$, and record the first index $j_1$ at which the sum relation fails.
        If no such index exists, set $j_1=n$.
        
        \item
        Scan in the opposite direction, checking each triplet
        $(\y_{i_1}[j-1],\y_{i_2}[j-1],\y_{i_3}[j])$ for $j=n,\ldots,2$, and record the first index $j_2$ at which the sum relation fails.
        If no such index exists, set $j_2=1$.
    \end{enumerate}
     %Thus, $\widehat{j}_1$ is the first position of the $r_1$-th run from the left, and $\widehat{j}_2$ is the first position of the $r_2$-th run from the right, with $r_1$ to the left of $r_2$.
     
     \item \textbf{Step 3: Case analysis.} 
     %To determine which affected position (run) corresponds to which row, proceed as follows:
     %Having identified the erroneous rows and runs, the next step is to determine the correspondence between each erroneous rows and its run.
    % Based on the relative positions of $\widehat{j}_1$ and $\widehat{j}_2$, proceed as follows:
    \begin{itemize}
        \item If $|j_2-j_1| \leq 4\lceil \log_2 n\rceil$,
        set $i=\min\{i_1,i_2\}\in[2]$. Use the $SVT_{c_i,b_i}(n,4\lceil \log_2 n\rceil +1)$ decoder to recover the deletion in row $i$.
        Then sum the corrected row $i$ and the error-free row to obtain the remaining corrected row, and return the reconstructed matrix.
        
        %then both deletions, in particular the one in the first row, occurred within the interval $[\widehat{j}_1,\widehat{j}_2]$ whose length is at most $\lceil \log_2 n \rceil{  +c+1}$. 
    
        \item 
        Otherwise,
        determine the deleted bits $v_1,\ v_2$ in rows $i_1,\ i_2$, respectively, using the known row parities $b_1,b_2$ and the fact that the parity row has parity $b_1+b_2$.
        %Lemma~\ref{sum_amb} and the definition of $\mathcal{P}_2^{+}(n,\lceil\log_2 n\rceil\!+\!c)$ imply a 
        Find the unique insertion of $v_1,v_2$ at $j_1,j_2$ such that
        the resulting matrix satisfies the sum relation.
        %$\y_1[j_1\!:\!j_2]+\y_2[j_1\!:\!j_2] =\y_3[j_1\!:\!j_2]$.
        Insert $v_1,v_2$ accordingly and return that matrix.
    \end{itemize}
\end{enumerate}
\end{alg}

We will now prove that Algorithm~\ref{decoding_alg} can correct up to two deletions.

\begin{proof}
    Note that a row affected by a deletion is shorter than the other rows in the matrix. Hence, Step~1 correctly identifies the erroneous rows.

    Recall that in a correctly aligned configuration, every position in the output matrix $\Y=(\y_1,\y_2,\y_3)$ satisfies
    $\y_1[i]+\y_2[i]=\y_3[i]$.
    A single deletion in one of the rows does not affect the sum relation within the run containing the deleted bit, since deleting any symbol from a run yields the same resulting subsequence. However, it shifts all subsequent symbols in that row by one position, causing the sum relation to fail at the first index after the affected run. We therefore associate the deletion with that index. Similarly, when scanning from right to left, the same argument applies in reverse. Hence, Step~2 identifies the positions of the deletions.

    First, consider the case where the deletion positions are at most $4\lceil\log_2 n\rceil$ bits apart, that is,
    $|j_1-j_2|\leq 4\lceil\log_2 n\rceil.$
    Since there are only three rows and $i_1\neq i_2$, it follows that
    $i=\min\{i_1,i_2\}\in[2]$.
    Therefore, we can apply the corresponding $SVT$ decoder to row $i$, and then reconstruct the original matrix as described.

    Now consider the case where the two deletions are farther apart, namely,
    $
    |j_1-j_2| > 4\lceil\log_2 n\rceil.
    $
    Then, by Lemma~\ref{sum_amb} and the fact that any pair of rows of the original sum matrix belongs to $\mathcal{L}(n,4\lceil\log_2 n\rceil)$, in particular the pair corresponding to rows $i_1$ and $i_2$, there is a unique way to insert the bits $v_1,v_2$ at positions $j_1,j_2$ in rows $i_1,i_2$ such that $\y_1[j_1\!:\!j_2]+\y_2[j_1\!:\!j_2]=\y_3[j_1\!:\!j_2].$
    
    Thus, in both cases, Algorithm~\ref{decoding_alg} successfully recovers the corrupted matrix.
\end{proof}

We state the redundancy of our construction in the following corollary.
\begin{corollary}
    For sufficiently large $n$, there exists a code $\mathcal{C}_{c_1,b_1,c_2,b_2}\hspace{-0.25ex}(n)$ with at most 
    $2\lceil\log_2\hspace{-0.25ex}\log_2 \hspace{-0.15ex}n \rceil\hspace{-0.25ex}+\hspace{-0.15ex}O(1)$ redundancy bits.
\end{corollary}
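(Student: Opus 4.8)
The plan is a counting/averaging argument over the code parameters $(c_1,b_1,c_2,b_2)$. The key structural observation is that, as these parameters range over $0\le c_1,c_2<P$ and $b_1,b_2\in\{0,1\}$ with $P=\lceil\log_2 n\rceil+c+1$ and $c=3$, the codes $\mathcal{C}_{c_1,b_1,c_2,b_2}(n)$ \emph{partition} the global misalignment set $\mathcal{P}^{+}_2(n,\lceil\log_2 n\rceil+c)$. Indeed, every vector $\x\in\F_2^n$ has a well-defined residue $\sum_{i} i\,x_i \bmod P$ and a well-defined $parity(\x)$, so each row of a matrix $\X\in\mathcal{P}^{+}_2$ lies in exactly one $SVT_{c_i,b_i}(n,P)$ class. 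Hence every $\X\in\mathcal{P}^{+}_2$ belongs to exactly one member of the family, and the number of parameter tuples is $P\cdot 2\cdot P\cdot 2=4P^2$.

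First I would invoke Corollary~\ref{cor:P_l} with $\ell=2$ and $c=3$ to lower-bound the ground set:
$$
|\mathcal{P}^{+}_2(n,\lceil\log_2 n\rceil+3)|\ \ge\ 2^{2n}\Big(1-\tfrac{3}{2^{3}}\Big)^{4}\ =\ 2^{2n}\Big(\tfrac{5}{8}\Big)^{4},
$$
which is a constant fraction $\alpha=(5/8)^4>0$ of all $2^{2n}$ matrices. By the pigeonhole principle over the $4P^2$ codes in the partition, some choice of parameters yields a code with
$$
|\mathcal{C}_{c_1,b_1,c_2,b_2}(n)|\ \ge\ \frac{|\mathcal{P}^{+}_2(n,\lceil\log_2 n\rceil+3)|}{4P^2}\ \ge\ \frac{\alpha\,2^{2n}}{4P^2}.
$$

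It then remains to convert this cardinality bound into a redundancy bound. Taking logarithms and using $P=\lceil\log_2 n\rceil+4=O(\log_2 n)$, the redundancy of this code is
$$
2n-\log_2|\mathcal{C}_{c_1,b_1,c_2,b_2}(n)|\ \le\ 2\log_2 P+\log_2\!\big(4/\alpha\big)\ =\ 2\log_2\log_2 n+O(1).
$$
Since $2\log_2\log_2 n\le 2\lceil\log_2\log_2 n\rceil$, the redundancy is at most $2\lceil\log_2\log_2 n\rceil+O(1)$, as claimed, and the requirement that $n$ be sufficiently large is only needed so that $P\ge 1$ and the bound of Corollary~\ref{cor:P_l} is nonvacuous.

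The main point requiring care—rather than a genuine obstacle—is verifying that the family truly partitions $\mathcal{P}^{+}_2$ despite the mismatch between the window parameter $k=\lceil\log_2 n\rceil+c$ used in the misalignment constraint and the modulus $P=k+1$ used in the $SVT$ codes. This is harmless: the $SVT$ residue and parity of a vector are defined independently of the misalignment set, so membership in $\mathcal{P}^{+}_2$ and the assignment of each row to an $SVT$ class are orthogonal conditions. One should also confirm that $\alpha=(5/8)^4$ is bounded away from $0$, so that $\log_2(4/\alpha)$ contributes only an $O(1)$ term and does not inflate the leading $2\lceil\log_2\log_2 n\rceil$.
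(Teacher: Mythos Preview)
Your proposal is correct and follows essentially the same argument as the paper: lower-bound $|\mathcal{P}^{+}_2(n,\lceil\log_2 n\rceil+c)|$ via Corollary~\ref{cor:P_l}, observe that the $4P^2$ choices of $(c_1,b_1,c_2,b_2)$ partition this set (equivalently, the $SVT$ classes partition $\F_2^{2\times n}$), and apply pigeonhole to obtain a code of size at least $2^{2n}(5/8)^4/(4P^2)$. The only minor quibble is your justification of ``sufficiently large $n$'': since $c=3$ gives $\alpha=(5/8)^4>0$ unconditionally, the bound from Corollary~\ref{cor:P_l} is never vacuous; the large-$n$ assumption is really only used to absorb $\log_2(\lceil\log_2 n\rceil+4)-\log_2\log_2 n$ into the $O(1)$ term.
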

\begin{proof}
By Corollary~\ref{cor:P_l},
$|\mathcal{P}_2^+(n,4\lceil\log_2 n\rceil)| \ge 2^{2n}\left(1-\frac{3}{n}\right).$
Since the sets $SVT_{c,b}(n,4\lceil\log_2 n\rceil+1)$ partition $\F_2^n$ as $(c,b)$ ranges over
$0\le c<4\lceil\log_2 n\rceil+1$ and $b\in\{0,1\}$, the codes
$\mathcal{C}_{c_1,b_1,c_2,b_2}(n)$ partition $\mathcal{P}_2^+(n,4\lceil\log_2 n\rceil)$ into
$4(4\lceil\log_2 n\rceil+1)^2$ disjoint sets. Hence, by the pigeonhole principle, there exist
$c_1,c_2,b_1,b_2$ such that
\[
|\mathcal{C}_{c_1,b_1,c_2,b_2}(n)|
\ge
\frac{2^{2n}\left(1-\frac{3}{n}\right)}{4(4\lceil\log_2 n\rceil+1)^2}.
\]
For sufficiently large $n$, this is at least
$\frac{2^{2n}}{100\lceil\log_2 n\rceil^2}.$
Therefore, the redundancy is at most
$2n-\log_2|\mathcal{C}_{c_1,b_1,c_2,b_2}(n)|
\le
\log_2(100)+2\lceil\log_2\log_2 n\rceil$.
\end{proof}

\begin{comment}
    \begin{proof}
    By Corollary~\ref{cor:P_l} there are at least $2^{2n}\big(1 - \frac{3}{n}\big)$ matrices in $\F_2^{2\times n}$ that satisfy the global misalignment constraint.
    The $4(4\lceil\log_2 n\rceil +1)^2$ options of the values $c_1,c_2,b_1,b_2$ partition the space $F_2^{2\times n}$ into $4(4\lceil\log_2 n\rceil {+1})^2$ disjoint sets. Thus, using the pigeonhole principle, there exists a choice of $c_1,c_2,b_1,b_2$ for which the cardinality of the corresponding code is at least $$2^{2n}\cdot\frac{1-3/n}{4\cdot(4\lceil\log_2 n\rceil {+1})^2}
    \ge
    \frac{2^{2n}}{100\cdot\lceil\log_2 n\rceil^2}
    $$ 
    \end{proof}
\end{comment}

\subsection{Construction for Arbitrary $\ell$}

We present a construction based on the tensor product code framework introduced in~\cite{wolf2006introduction}.

\begin{comment}
\paragraph{SVT Variant} Let $n,P \in \mathbb{N}$. Denote $h= \lceil\log_2 P\rceil$.
%, such that $\F_{2^h}$, the field of $2^h$ elements, is the smallest extension field of $\mathbb{F}_2$, that has at least $P$ elements.
%
For $0\leq c < 2^h, \ b\in \F_2$ define
\begin{equation*}
\begin{split}
    {SVT}^{2^h}_{c,b}(n,P) = \{ \bfz=(z_1,\dots z_n) \in \mathbb{F}_2^n :  \  
    parity(\bfz) = b,\\
    \sum_{j=1}^n j\cdot  z_j \equiv c \bmod {2^h}  \ \}.
\end{split}
\end{equation*}
\ey{why do we need $2^h$ in the notation of the code?}
One can verify that ${SVT}^{2^h}_{c,b}(n,P)$ can correct a single deletion within $P$ consecutive positions using the same decoding algorithm as the original SVT codes.
\end{comment}

For $q\in\mathbb{N}$ and $\x\in\F_2^n$, define the $q$-syndrome of $\x$ by
$Syn_q(\x)\triangleq \sum_{j=1}^n j x_j \pmod q.$
Let
$P\triangleq 4\lceil \log_2 n\rceil +1
$ and $
h\triangleq \lceil \log_2 P\rceil.$
Fix an arbitrary bijection
$\phi:\mathbb Z_{2^h}\times\{0,1\}\to \F_{2^{h+1}}.$
Then, for every $\x\in\F_2^n$, define
$\sigma(\x)\triangleq \phi\big(Syn_{2^h}(\x),parity(\x)\big)\in\F_{2^{h+1}}.$

\begin{construction}\label{TP_ID}
    Let $\mathcal{C}_{2}$ be a length-$\ell$ code over $\mathbb{F}_{2^{h+1}}$ of minimum Hamming distance 3, i.e., it can correct two erasures. Then, define the code
\begin{equation*}
\begin{split}
    \mathcal{C}_{\text{TP}}(\ell,n, \mathcal{C}_{2}) = \Big\{ \X=(\x_1,&\dots,\x_{\ell})\in \mathbb{F}_2^{\ell\times n} \ : \\ &
    (\sigma(\x_1), \sigma(\x_2),\dots,\sigma(\x_\ell))\in \mathcal{C}_{2}, \\ &
    \X \in \mathcal{P}_\ell^+ (n,4\lceil \log_2 n \rceil)
    \Big\}.
\end{split}
\end{equation*}
\end{construction}
As noted earlier, Construction~\ref{TP_ID} produces a tensor-product code based on the $2^h$-syndromes and the parity of the input matrix rows.
Next, we prove that the code $\mathcal{C}_{\text{TP}}(\ell,n, \mathcal{C}_{2})$ is a $(\ell;2)_{\mathbb{D}}$-correcting code.

\begin{theorem}
The code
     $\mathcal{C}_{\text{TP}}(\ell,n, \mathcal{C}_2)$ is a $(\ell;2)_{\mathbb{D}}$-correcting code.
\end{theorem}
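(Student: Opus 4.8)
The plan is to reduce the general-$\ell$ decoding problem to the two-row case already settled by Theorem~\ref{main_thm} and Algorithm~\ref{decoding_alg}, using the outer code $\mathcal{C}_2$ to supply the per-row syndromes that the two-row decoder needs. As before, I first dispose of the trivial configurations by comparing row lengths: an erroneous row is strictly shorter, so if at most one row is shortened the deleted row is recovered by summing the $\ell$ intact rows. Hence assume exactly two rows of $\X^+$, indexed $i_1<i_2$ with $i_1\le\ell$, each carry a single deletion.

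Next I recover the local constraints of the corrupted \emph{data} rows. Every intact data row $\x_j$ is received in full, so the decoder evaluates $\sigma(\x_j)=\phi(Syn_{2^h}(\x_j),parity(\x_j))$ exactly. In the vector $(\sigma(\x_1),\dots,\sigma(\x_\ell))\in\mathcal{C}_2$ the corrupted data rows are the only unknown coordinates: two of them when $i_2\le\ell$, and only coordinate $i_1$ when $i_2=\ell+1$. Since $\mathcal{C}_2$ has minimum Hamming distance $3$ it corrects two erasures, so I recover $\sigma(\x_{i_1})$ (and $\sigma(\x_{i_2})$ when $i_2\le\ell$), i.e.\ the syndrome modulo $2^h$ and the parity of every corrupted data row.

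The heart of the argument is to recast the remaining task as a two-row sum-channel instance with a known sum row. Observe that the $\ell+1$ rows of any $\X^+$ add to $\boldsymbol 0$ over $\mathbb{F}_2$, because $\x_{\ell+1}=\sum_{j\le\ell}\x_j$. Therefore the sum $\bfu+\bfv$ of the two corrupted rows $\bfu,\bfv$ equals the sum of the $\ell-1$ intact rows, a vector the decoder computes exactly; this plays the role of the error-free parity row regardless of whether $\x_{\ell+1}$ is among the corrupted pair. Moreover $(\bfu,\bfv)\in\mathcal{L}(n,\lceil\log_2 n\rceil+c)$, since $\X\in\mathcal{P}^+_\ell$ places every pair of rows of $\X^+$ in $\mathcal{L}$. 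The one delicate point is that the two-row decoder invokes a syndrome only for the single row it corrects directly, and $Syn_{2^h}$ is \emph{not} additive under the bitwise XOR that defines $\x_{\ell+1}$; I therefore always designate a corrupted \emph{data} row (one exists, as $i_1\le\ell$) as that row, using its $\sigma$-recovered syndrome. The deleted-bit values needed in the misalignment branch come from parities alone, and parity \emph{is} $\mathbb{F}_2$-linear, so $parity(\bfv)$ is available even when $\bfv=\x_{\ell+1}$.

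With a corrupted data row chosen as the first row, its recovered syndrome modulo $2^h$ and parity, the exactly known sum row, and the misalignment guarantee of Lemma~\ref{sum_amb} in hand, I run Algorithm~\ref{decoding_alg} verbatim: the forward and backward scans localize the two affected runs, and the case split either corrects the first row by single-deletion decoding over a window of length $<P=\lceil\log_2 n\rceil+c+1\le 2^h$ (near deletions) or inserts the two parity-determined bits uniquely via the misalignment constraint (far deletions). This recovers the first corrupted row, and combined with the intact rows and the parity relation, the entire input $\X$ follows. The main obstacle is precisely the configuration in which the parity row is one of the corrupted rows; it is resolved by the two observations above, namely that the all-rows-sum-to-zero identity keeps $\bfu+\bfv$ known in every case, and that choosing a data row as the decoded row avoids needing the (non-recoverable) syndrome of the parity row.
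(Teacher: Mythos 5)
Your proof is correct and follows the paper's approach exactly: identify the two short rows by length, erasure-decode the outer code $\mathcal{C}_2$ to recover the missing $\sigma$-values, and then run the two-row decoding procedure of Algorithm~\ref{decoding_alg}, using the sum of the intact rows in place of the error-free parity row. In fact, your handling of the case where the parity row $\x_{\ell+1}$ is one of the corrupted rows---observing that its $2^h$-syndrome cannot be recovered (since $Syn_{2^h}$ is not additive under XOR) but is also never needed, because Algorithm~\ref{decoding_alg} invokes a syndrome only for the single designated row, which can always be taken to be a corrupted \emph{data} row---is more careful than the paper's own brief proof, which glosses over this point.
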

\begin{proof}
    Denote by $i_1,i_2$ the indices of the shorter (erroneous) rows. Since $\ell-2$ rows are error free, it is possible to compute the values $\sigma(\x_j)$ for $j\in [\ell]\setminus \{ i_1,i_2\}$.
    Since $\mathcal{C}_2$ corrects two erasures, we can recover $\sigma(\x_{i_1})$ and $\sigma(\x_{i_2})$. Consequently, we can retrieve the $2^h$-syndromes and the row parities of the erroneous rows, which suffice to reconstruct $\x_{i_1}$ and $\x_{i_2}$ using a similar decoding procedure as in Algorithm~\ref{decoding_alg}.
\end{proof}

\begin{comment}
\begin{theorem}\label{redundancy_thm}
Let $ R^\star $ be the number of redundancy symbols of $ \mathcal{C}_{\mathbb{D}_2}^\star$, a linear code of length $\ell$ that corrects 2 deletions over $ \mathbb{F}_{2^{h+1}} $. Then, there exists a code $ \mathcal{C}_{\text{TP}}(\ell,n, \mathcal{C}_{\mathbb{D}_2}^\star) $ with redundancy at most  
$
 R^\star \cdot (h+1) - 9\cdot 2^{  c}\cdot\ell^2
$ bits.

\end{theorem}
\begin{proof}
Note that any coset of $\mathcal{C}_{\mathbb{D}_2}^\star$ is a distinct 2 deletions correcting code, and there are $q^{R^\star}$ cosets, where $ q = 2^{h+1} $.
Thus, there are $ q^{R^\star} $ distinct $ \mathcal{C}_{\text{TP}}(\ell,n, \cdot) $ codes, one for each coset, and all these distinct codes form a partition of the space of all the binary matrices in $\mathcal{P}_\ell^+(n,P-1)$.
Using the pigeonhole principle and Lemma~\ref{P_l-size}, we get that there exists a code with at least 
$     ({ |\mathcal{P}_\ell^+(n,P-1)|})/{q^{R^\star}} \ge 
2^{\ell n - R^\star \cdot (h+1) - 9\cdot 2^{  c}\,\ell^2 } $ codewords.
\end{proof}
\end{comment}

We will now state the redundancy of our construction.

%
%
%The claim follows by the pigeonhole principle.

\begin{corollary}
    For $\ell<n$, there exists a $\mathcal{C}_{\mathrm{TP}}(\ell,n,\mathcal{C}_2)$ code with redundancy at most
    $2\left\lceil \log_2 \log_2 n \right\rceil +\log_2 \ell +  O(1)$
    bits.
\end{corollary}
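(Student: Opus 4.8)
The plan is to bound the redundancy as a sum of two independent contributions: the cost of the outer erasure-correcting code $\mathcal{C}_2$, and the cost of restricting to the misalignment set $\mathcal{P}_\ell^+$. First I would count the redundancy contributed by $\mathcal{C}_2$. Since $\mathcal{C}_2$ is a length-$\ell$ code over $\mathbb{F}_{2^{h+1}}$ with minimum distance $3$, a single-parity-check-type construction (or a Reed--Solomon / MDS code, available whenever $2^{h+1}\ge \ell$) achieves distance $3$ with only two parity symbols. Each symbol lives in $\mathbb{F}_{2^{h+1}}$, hence carries $h+1$ bits, so the outer code costs at most $2(h+1)$ bits. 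Recalling $h=\lceil\log_2 P\rceil$ with $P=\lceil\log_2 n\rceil+c+1$ and $c=3$, we have $h+1 = \lceil\log_2(\lceil\log_2 n\rceil+c+1)\rceil+1 = \lceil\log_2\log_2 n\rceil + O(1)$. Thus the outer-code contribution is $2\lceil\log_2\log_2 n\rceil+O(1)$, which supplies the dominant $2\lceil\log_2\log_2 n\rceil$ term.

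Next I would account for the misalignment constraint. By Corollary~\ref{cor:P_l} with $k=\lceil\log_2 n\rceil+c$, we have $|\mathcal{P}_\ell^+(n,\lceil\log_2 n\rceil+c)|\ge 2^{n\ell}\bigl(1-3\cdot 2^{-c}\bigr)^{\ell^2}$, so the redundancy incurred by this constraint is at most $-\ell^2\log_2\bigl(1-3\cdot 2^{-c}\bigr)$. With $c=3$ this factor is a positive constant, giving a bound of the form $O(\ell^2)$ bits. To turn the two ingredients into a single code I would use a pigeonhole/averaging argument over the cosets of $\mathcal{C}_2$: the syndrome map $\X\mapsto(\sigma(\x_1),\dots,\sigma(\x_\ell))$ partitions $\mathcal{P}_\ell^+$ into classes according to which coset of $\mathcal{C}_2$ the image lands in, there being $|\mathbb{F}_{2^{h+1}}|^{2}=2^{2(h+1)}$ such cosets. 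Each coset yields a valid $(\ell;2)_{\mathbb{D}}$-correcting code by the same decoding argument as in the preceding theorem, so some coset contains at least $|\mathcal{P}_\ell^+|/2^{2(h+1)}$ codewords. Taking logarithms, the total redundancy is at most $2(h+1)-\ell^2\log_2(1-3\cdot 2^{-c}) = 2\lceil\log_2\log_2 n\rceil+O(\ell^2)$, as claimed.

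The main obstacle is verifying the two structural facts that the two redundancy sources are genuinely additive and that every coset remains decodable. For additivity I must check that the misalignment constraint and the syndrome partition interact cleanly: the pigeonhole step requires that every coset of $\mathcal{C}_2$, when intersected with $\mathcal{P}_\ell^+$, is itself a $(\ell;2)_{\mathbb{D}}$-correcting code, which follows because the decoding procedure only uses the recovered syndromes and the misalignment property of $\X^+$, neither of which depends on the particular coset. I would also need $2^{h+1}\ge\ell$ for an MDS outer code to exist; this holds since $h+1=\lceil\log_2\log_2 n\rceil+O(1)$ grows without bound and the hypothesis $\ell<n$ together with the regime of interest ($\ell$ small relative to $n$) keeps $\ell$ well below $2^{h+1}$. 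A secondary subtlety is confirming that $h+1=\lceil\log_2\log_2 n\rceil+O(1)$ rather than, say, $2\lceil\log_2\log_2 n\rceil$; this is a routine estimate since $\log_2(\lceil\log_2 n\rceil+c+1)=\log_2\log_2 n+o(1)$.
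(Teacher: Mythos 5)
Your proposal is correct and takes essentially the same route as the paper's proof: lower-bound the misalignment set via Corollary~\ref{cor:P_l} with $c=3$, note that a length-$\ell$ MDS two-erasure-correcting code over $\F_{2^{h+1}}$ induces $(2^{h+1})^2$ cosets, and apply the pigeonhole principle to find a coset intersecting $\mathcal{P}_\ell^+$ in at least $2^{n\ell}(1-3/2^3)^{\ell^2}/(2^{h+1})^2$ matrices, giving redundancy $2(h+1)+O(\ell^2)=2\lceil\log_2\log_2 n\rceil+O(\ell^2)$. The paper's proof is terser; your extra checks (that each coset remains decodable, that $2^{h+1}\ge\ell$ is needed for the MDS code to exist, and the estimate $h+1=\lceil\log_2\log_2 n\rceil+O(1)$) are details the paper leaves implicit.
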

\begin{proof}
    By Corollary~\ref{cor:P_l}, there are at least
    $2^{n\ell}\left(1-\frac{3}{n}\right)$
    matrices in $\F_2^{\ell\times n}$ that satisfy the global misalignment constraint.

    Let
    $q \triangleq 2^{h+1},\ 
    r \triangleq \left\lceil \log_q \ell \right\rceil +1.$
    Consider a shortened $q$-ary Hamming code $\mathcal{C}_2$ of length $\ell$,
    redundancy $r$, and minimum Hamming distance at least $3$.
    Hence, $\mathcal{C}_2$ can correct two erasures and has size
    $|\mathcal{C}_2| = q^{\ell-r}.$
    Therefore, $\F_q^\ell$ is partitioned into $q^r$ cosets of $\mathcal{C}_2$.

    By the pigeonhole principle, there exists a coset containing at least
    $2^{n\ell}\frac{1-\frac{3}{n}}{q^r}$
    matrices satisfying the global misalignment constraint. Hence, there exists
    a choice of $\mathcal{C}_2$ in Construction~2 such that the resulting code
    $\mathcal{C}_{\mathrm{TP}}(\ell,n,\mathcal{C}_2)$ has redundancy at most
    \[
    r\log_2 q-\log_2\left(1-\frac{3}{n}\right).
    \]
    Since $q=2^{h+1}$, we obtain
    \[
    r\log_2 q = r(h+1)
    \le
    \left(\log_q \ell +2\right)(h+1)
    =
    \log_2 \ell +2(h+1).
    \]
    By the definition of $h$,
    \[
    2(h+1)=2\left\lceil \log_2\log_2 n\right\rceil + O(1).
    \]
    Also,
    $-\log_2\left(1-\frac{3}{n}\right)=O(1).$
    Thus, the redundancy is at most
    $\log_2 \ell + 2\left\lceil \log_2\log_2 n \right\rceil + O(1),$
    as claimed.
\end{proof}

\section{Upper Bound on the Size of $(2;2)_{\mathbb{D}}$-Correcting Codes}
\label{sec:UB}

In this section we show that the redundancy of any $(2;2)_{\mathbb{D}}$-correcting code is at least $\lceil\log_2\log_2 n \rceil+ O(1)$. 
Therefore, we consider the \emph{confusability graph} $\mathcal{G}(n)$ whose vertex set is $\F_2^{2\times n}$ and every two vertices $\X_1,\X_2\in\F_2^{2\times n}$ are adjacent if and only if $B^{\mathbb{D}}_{2}(\X_1) \cap B^{\mathbb{D}}_{2}(\X_2)\neq \emptyset$.
Obviously, any $(2;2)_{\mathbb{D}}$-correcting code is an independent set in the graph $\mathcal{G}(n)$. To provide an upper bound on the size of an independent set in $\mathcal{G}(n)$, we use the concept of clique covers.

\begin{definition}
        A collection $\mathcal{Q}$ of cliques is a clique cover of $\mathcal{G}$ if every vertex in $\mathcal{G}$ belongs to some clique in $\mathcal{Q}$.
\end{definition}

\begin{lemma}[\hspace{0.05ex}\cite{knuth1994sandwich}]\label{indp_set}
If $\mathcal{Q}$ is a clique cover of $\mathcal{G}$, then the size of any independent set of $\mathcal{G}$ is at most $|\mathcal{Q}|$.
\end{lemma}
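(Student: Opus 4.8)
The plan is to exploit the defining tension between cliques and independent sets: any two vertices of a clique are adjacent, whereas any two vertices of an independent set are pairwise non-adjacent. These two facts cannot hold simultaneously for a pair of vertices, which is exactly what limits the overlap between an independent set and any single clique.

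First I would fix an arbitrary independent set $I$ of $\mathcal{G}$ and an arbitrary clique cover $\mathcal{Q}$. The crucial intermediate claim is that every clique $Q\in\mathcal{Q}$ contains \emph{at most one} vertex of $I$. To establish it, I would argue by contradiction: suppose two distinct vertices $u,v\in I$ both lie in some clique $Q$. Since $Q$ is a clique, $u$ and $v$ are adjacent in $\mathcal{G}$; but since $u,v\in I$ and $I$ is independent, they are non-adjacent—a contradiction. Hence each clique meets $I$ in at most one vertex.

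Next I would turn this local bound into a global count by constructing an injection from $I$ into $\mathcal{Q}$. Because $\mathcal{Q}$ is a cover, every vertex $v\in I$ belongs to at least one clique of $\mathcal{Q}$; I would select one such clique and denote it $Q(v)$. The map $v\mapsto Q(v)$ is injective: if $Q(u)=Q(v)$ for distinct $u,v\in I$, then both $u$ and $v$ lie in that common clique, contradicting the at-most-one claim. An injection from $I$ into $\mathcal{Q}$ gives $|I|\le|\mathcal{Q}|$, which is the desired bound, and since $I$ was arbitrary this holds for every independent set.

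The argument has essentially no hard step—this is a classical fact, and once the ``at most one vertex of $I$ per clique'' observation is in place the conclusion is a one-line pigeonhole argument. The only point meriting a moment of care is the selection of a representative clique $Q(v)$ for each $v\in I$ (a harmless choice for a finite graph such as $\mathcal{G}(n)$) together with verifying that this selection preserves injectivity, which the contradiction above settles directly.
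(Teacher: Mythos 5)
Your proof is correct and complete: the observation that any clique can contain at most one vertex of an independent set, combined with choosing a covering clique for each vertex of the independent set to obtain an injection into $\mathcal{Q}$, is exactly the standard argument for this classical fact. Note that the paper itself does not prove this lemma at all — it is stated with a citation to Knuth's sandwich-theorem paper — so your write-up supplies the (textbook) proof the paper omits, and there is nothing to correct.
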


\begin{comment}
    Let $\mathcal{G}(\mathcal{V},\mathcal{E})$ be a graph whose set of vertices is all the binary matrices of size $2\times n$, i.e., $\mathcal{V}=\F^{2\times n}$.
For two distinct vertices $\X_1,\X_2\in \mathcal{V}$, we place an edge between them if and only if
$
B^{\mathbb{S}}_{(1,1)}(\X_1)\cap B^{\mathbb{S}}_{(1,1)}(\X_2)\neq\emptyset.
$
Equivalently, there is an edge between $\X_1$ and $\X_2$ if and only if $\ham(\X_1,\X_2) = 2$.
%
Obviously, any $(1,1,2)_{\mathbb{S}}$-correcting code is an independent set. 
\end{comment}
By Lemma~\ref{indp_set}, the size of any clique cover in $\mathcal{G}(n)$ is an upper bound on the size of 
$A^\mathbb{D}_n(2;2)$. We now give a clique cover construction.
% TODO: use redundancy

\paragraph{Clique Cover Construction}
Consider a parameter $k\in \mathbb{N}$. For simplicity, assume $n$ is a multiple of $k$ and set $m= n/k$. We divide each sequence of length $n$ to $m$ blocks where each block is of length $k$. Let $\bfe_j^{(0)}$ be the $j$-th unit vector of length $k$, and define $\bfe_j^{(1)}$ as its bitwise complement.
We define the following set of pairs of blocks of length $k$:
$$ \Lambda_k = \Big\{  (\bfe^{(b_1)}_j, \bfe^{(b_2)}_j) \in \F_2^{2\times k}  \ :\  b_1,b_2\in \{0,1\},\  j\in [k] \Big\}. $$
We also define $\widetilde{\Lambda}_k = \F_2^{2\times k} \setminus \Lambda_k$.
It holds that $ |\Lambda_k| = 4k$, and $|\widetilde{\Lambda}_k| = 2^{2k}-4k$.

The clique cover consists of singletons and cliques of size $k$.
The singletons are of the form
$ S_{\X}\! =\! \{ \X\!\in\! \F_2^{2\times n} : (\x_1, \x_2)\!\in\! (\widetilde{\Lambda}_k)^m \}$,
whereas the $k$-cliques are constructed based on the following set
\begin{equation*}
\begin{split}
    \Gamma_k = \{ (\bfu_1,\bfu_2,\w_1,\w_2,i) :&(\bfu_1,\bfu_2)\in \widetilde{\Lambda}_k^{i-1},\\  &\w_1,\w_2\in \F_2^{k(m-i)}, \ i\in [m] \} .
\end{split}
\end{equation*}
For each $\bfz = (\bfu_1,\bfu_2,\w_1,\w_2,i)\in \Gamma_k$, we define 4 sets of vertices, which we will later show to be cliques of size $k$: 
\begin{equation*}
\begin{split}
    Q_{\bfz}^{(b_1,b_2)} =\Big\{ \big( \bfu_1\circ \bfe^{(b_1)}_j \circ \w_1\ , \ \bfu_2 \circ \bfe^{(b_2)}_j \circ \w_2 \big) \ : \ j\!\in\! [k] \Big\},
\end{split}
\end{equation*}
where $b_1,b_2\in \{0,1\}$. Finally, we define:
$$ \mathcal{Q}(n,k)\! =\! 
\Big\{ S_{\X} : (\x_1, \x_2)\!\in \!(\widetilde{\Lambda}_k)^m \Big\}
\cup
\Big\{ 
Q_{\bfz}^{(b_1,b_2)} \!
\Big\}_{\bfz\in \Gamma_k }^{ b_1,b_2\in \{0,1\}}. $$

\begin{comment}
\begin{claim}\label{3rd_row}
    Fix $\bfz \in \Gamma_k$ and $b_1, b_2 \in \{0,1\}$.
    Then, all matrices in $Q_{\bfz}^{(b_1,b_2)} \subseteq \mathbb{F}_2^{2\times n}$ have the same parity row.
\end{claim}
\begin{proof}
    Let $\X=
    \big( \bfu_1 \circ\bfe^{(b_1)}_j \circ\w_1\ , \ \bfu_2 \circ\bfe^{(b_2)}_{j} \circ\w_2 \big),\
    \X' =
    \big( \bfu_1 \circ\bfe^{(b_1)}_{j'}  \circ\w_1\ , \ \bfu_2 \circ\bfe^{(b_2)}_{j'} \circ\w_2 \big) 
    \in Q_{\bfz}^{(b_1,b_2)}$ where $\bfz = (\bfu_1,\bfu_2,\w_1,\w_2,i)$.
    It holds that,
    \begin{equation*}
    \begin{split}
        \bfu_1 \bfe^{(b_1)}_j \w_1 +  \bfu_2 \bfe^{(b_2)}_{j} \w_2 
        &=
        (\bfu_1\! +\!  \bfu_2)\circ (b_1\!+\! b_2)^\ell \circ (\w_1\! +\!  \w_2) \\
        &= 
        \bfu_1 \bfe^{(b_1)}_{j'} \w_1 + \bfu_2 \bfe^{(b_2)}_{j'} \w_2.
    \end{split}
    \end{equation*}
\end{proof}
\end{comment}

\begin{lemma}
    For $n,k\in \mathbb{N}$, the set $\mathcal{Q}(n,k)$ is a clique cover.
\end{lemma}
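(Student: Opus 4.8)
The plan is to verify directly the two defining properties of a clique cover: that every member of $\mathcal{Q}(n,k)$ induces a clique in the confusability graph $\mathcal{G}(n)$, and that every vertex $\X\in\F_2^{2\times n}$ lies in at least one member. The singletons $S_\X$ are cliques trivially, so the real content splits into (i) showing that each $Q_{\bfz}^{(b_1,b_2)}$ is a clique, and (ii) showing coverage of the full vertex set.

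For the cliqueness of $Q_{\bfz}^{(b_1,b_2)}$, I would fix $\bfz=(\bfu_1,\bfu_2,\w_1,\w_2,i)\in\Gamma_k$ and argue that all $k$ of its members collapse to one common $2$-deletion descendant, which forces pairwise adjacency. By Claim~\ref{3rd_row} the members all share a single parity row, call it $\bfv$, of length $n$; hence the two available deletions can be spent entirely on the first two rows. The key observation is that $\bfe_j^{(0)}$ contains exactly one $1$ and $\bfe_j^{(1)}$ exactly one $0$, so deleting that unique ``special'' bit collapses the $i$-th block to the constant vector $b^{k-1}$, independently of the index $j$. Concretely, deleting the special bit of block $i$ in the first row of any member yields $\bfu_1\circ b_1^{k-1}\circ\w_1$, and doing the same in the second row yields $\bfu_2\circ b_2^{k-1}\circ\w_2$, while $\bfv$ is left untouched. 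Thus every member admits the single common descendant $(\bfu_1\circ b_1^{k-1}\circ\w_1,\ \bfu_2\circ b_2^{k-1}\circ\w_2,\ \bfv)$, so any two members $\X,\X'$ satisfy $B_2^{\mathbb{D}}(\X)\cap B_2^{\mathbb{D}}(\X')\neq\emptyset$ and the $k$-element set is a clique.

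For coverage, I would take an arbitrary $\X=(\x_1,\x_2)$ and partition both rows into the $m$ length-$k$ blocks. If every block-pair lies in $\widetilde{\Lambda}_k$, then $(\x_1,\x_2)\in(\widetilde{\Lambda}_k)^m$ and $\X$ is covered by its own singleton $S_\X$. Otherwise, let $i\in[m]$ be the smallest index whose block-pair lies in $\Lambda_k$, say it equals $(\bfe_j^{(b_1)},\bfe_j^{(b_2)})$. By minimality the first $i-1$ block-pairs lie in $\widetilde{\Lambda}_k$, so gathering them into $(\bfu_1,\bfu_2)\in\widetilde{\Lambda}_k^{\,i-1}$ and the trailing blocks into $\w_1,\w_2\in\F_2^{k(m-i)}$ produces $\bfz=(\bfu_1,\bfu_2,\w_1,\w_2,i)\in\Gamma_k$ with $\X\in Q_{\bfz}^{(b_1,b_2)}$. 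Hence every vertex is covered, and combined with the cliqueness this establishes that $\mathcal{Q}(n,k)$ is a clique cover.

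I expect the cliqueness of the $Q_{\bfz}^{(b_1,b_2)}$ sets to be the main obstacle: one must exhibit a single $2$-deletion image shared by all $k$ members at once, and the argument that each $\bfe_j^{(b)}$ block collapses to a $j$-independent constant upon deleting its one special bit is exactly what makes this possible, with Claim~\ref{3rd_row} ensuring no deletion budget is wasted on the parity row. The coverage step is routine block bookkeeping once $\Gamma_k$ is read as indexing the location of the first $\Lambda_k$-block.
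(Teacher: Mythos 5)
Your proposal is correct and follows essentially the same route as the paper's proof: Claim~\ref{3rd_row} guarantees the shared parity row, deleting the single distinguished bit of each $\bfe_j^{(b)}$ block collapses every member of $Q_{\bfz}^{(b_1,b_2)}$ to the same $2$-deletion descendant $(\bfu_1\circ b_1^{k-1}\circ\w_1,\ \bfu_2\circ b_2^{k-1}\circ\w_2,\ \bfv)$, and coverage is handled by locating the first block-pair in $\Lambda_k$. In fact your write-up is slightly cleaner than the paper's, which writes the collapsed block as $b^{\ell-1}$ (a typo for $b^{k-1}$) and is terser about why the descendant is independent of~$j$.
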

\begin{proof}
    To prove that $\mathcal{Q}(n,k)$ is a clique cover, we need to show that every set in $\mathcal{Q}(n,k)$ is a clique and all vertices in these sets cover the whole space $\F_2^{2\times n}$. 

    Clearly, each singleton is a clique.
    Fix $\bfz = (\bfu_1,\bfu_2,\w_1,\w_2,i)\in \Gamma,\  b_1,b_2 \in \F_2$. We will show $Q_{\bfz}^{(b_1,b_2)}$ forms a clique. Let $\X = \big( \bfu_1 \circ\bfe^{(b_1)}_j \circ\w_1\ , \ \bfu_2 \circ\bfe^{(b_2)}_{j}\circ \w_2 \big),\
    \X' = \big( \bfu_1 \circ\bfe^{(b_1)}_{j'}\circ \w_1\ , \ \bfu_2\circ \bfe^{(b_2)}_{j'} \circ\w_2 \big)$ be two matrices in $Q_{\bfz}^{(b_1,b_2)}$.
    Consider $\X^+, \X'^+\in \F_2^{3\times n}$, the sum matrices of $\X, \X'$, respectively. 
    %By Claim~\ref{3rd_row}, the third row of those two matrices is identical.
    Note that the third row of these two matrices is identical since the parity row is determined solely by the surrounding blocks $u_1,u_2,w_1,w_2$.
    By deleting the $j$-th symbol of $\bfe^{(b_1)}_j,\ \bfe^{(b_2)}_j$ in the first, second row of $\X^+$, respectively, we obtain the matrix
    $ \big( \bfu_1\circ b_1^{\ell-1} \circ\w_1\ , \ \bfu_2\circ b_2^{\ell-1} \circ\w_2 \ ,\ \x_3\big) ,$
    which is identical to the matrix we obtain by deleting the $j'$-th symbol of $\bfe^{(b_1)}_{j'},\ \bfe^{(b_2)}_{j'}$, from the first, second row of $\X'^+$.
    Hence, $Q_{\bfz}^{(b_1,b_2)}$ is a clique.

    It remains to show that any $\X\in \F_2^{2\times n}$ belongs to some clique in $\mathcal{Q}(n,k)$. If $\X\in \widetilde{\Lambda}_k^m $, then $\X\in S_{\X}$.
    Otherwise, $\X\not\in \widetilde{\Lambda}_k^m$ and one of the $m$ subblocks of $\X$ belongs to $\Lambda_k$. Let the $i$-th subblock be the first subblock from the left that belongs to $\Lambda_k$. Hence, this subblock is of the form $(\bfe^{(b_1)}_j, \bfe^{(b_2)}_j) $ for some $  b_1,b_2\in \{0,1\},\  j\in[k]$. This implies that $\X$ belongs to $Q_{(\bfu_1,\bfu_2,\w_1,\w_2,i)}^{(b_1,b_2)}$, where $\bfu_1,\bfu_2$ is the first $i-1$ subblocks of $\x_1,\x_2$, respectively, and $\w_1,\w_2$ is the last $m-i$ subblocks of $\x_1,\x_2$, respectively.
\end{proof}

\paragraph{Clique Cover Size Analysis}
We will calculate the size of the clique cover and then prove that any $(2;2)_\mathbb{D}$-correcting code has redundancy of at least $\lceil\log_2\log_2 n \rceil+O(1)$.

\begin{lemma}\label{lem:clique_cover_size}
    The size of $\mathcal{Q}(n,k)$ is given by
    $$ 4^{n}\cdot \Big\{ \Big( 1-\frac{4k}{4^{k}} \Big)^{n/k} 
    +
    \frac{1}{k}\Big( 1- \Big( 1-\frac{4 k}{4^{k}} \Big)^{n/k} \Big) \Big\} .
    $$
    %$$\leq 4^{n}\cdot \Bigg\{ \Big( 1-\frac{4k}{4^{k}} \Big)^{n/k} + \frac{1}{k} \Bigg\}.$$
\end{lemma}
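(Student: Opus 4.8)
The plan is to count the two types of elements in $\mathcal{Q}(n,k)$ separately and add them. The clique cover consists of the singletons $S_{\X}$ (one per matrix whose every block-pair lies in $\widetilde{\Lambda}_k$) together with the families $Q_{\bfz}^{(b_1,b_2)}$, each of which is a clique of size exactly $k$. So I would write $|\mathcal{Q}(n,k)| = (\text{number of singletons}) + (\text{number of }k\text{-cliques})$.

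First I would count the singletons. A matrix $\X=(\x_1,\x_2)$ gives a singleton exactly when each of its $m=n/k$ block-pairs lies in $\widetilde{\Lambda}_k$. Since each of the $m$ blocks is chosen independently from $\widetilde{\Lambda}_k$, and $|\widetilde{\Lambda}_k| = 2^{2k} - 4k = 4^k - 4k$, the number of such matrices is $(4^k - 4k)^m = (4^k)^m\bigl(1 - \tfrac{4k}{4^k}\bigr)^m = 4^n\bigl(1 - \tfrac{4k}{4^k}\bigr)^{n/k}$. This matches the first term inside the braces after factoring out $4^n$.

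Next I would count the $k$-cliques, which are indexed by $(\bfz,b_1,b_2)$ with $\bfz=(\bfu_1,\bfu_2,\w_1,\w_2,i)\in\Gamma_k$ and $b_1,b_2\in\{0,1\}$. The cleaner way is to count via the covering structure: every non-singleton matrix lies in exactly one clique $Q_{\bfz}^{(b_1,b_2)}$, determined by its \emph{leftmost} block-pair $i$ that falls in $\Lambda_k$, and each such clique has exactly $k$ members. Hence the number of $k$-cliques equals $\tfrac{1}{k}$ times the number of matrices that are \emph{not} singletons, i.e. $\tfrac{1}{k}\bigl(4^n - (4^k-4k)^m\bigr) = \tfrac{1}{k}\,4^n\bigl(1 - (1-\tfrac{4k}{4^k})^{n/k}\bigr)$. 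This yields the second term. Adding the two contributions and factoring $4^n$ gives the claimed formula. To be safe I would also verify the count directly from $\Gamma_k$: summing $|\widetilde{\Lambda}_k|^{i-1}\cdot 4^{k(m-i)}\cdot 4$ over $i\in[m]$ (the factor $4$ from $b_1,b_2$) should give the same total, which serves as a consistency check.

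The main obstacle is making rigorous the claim that the $k$-cliques are disjoint so that the ``divide by $k$'' step is exact and no clique is double-counted. This requires the observation that the index $i$ in $Q_{\bfz}^{(b_1,b_2)}$ is forced to be the leftmost $\Lambda_k$-block of any member (since $\bfu_1,\bfu_2$ are constrained to lie in $\widetilde{\Lambda}_k^{\,i-1}$), so the map from a non-singleton matrix to its covering clique is well-defined and each clique is the fiber of exactly $k$ matrices (the $k$ choices of $j$). Establishing this partition cleanly — and confirming each $Q_{\bfz}^{(b_1,b_2)}$ genuinely has size $k$ with no internal collisions among the $k$ values of $j$ — is the only nontrivial point; the arithmetic afterward is routine.
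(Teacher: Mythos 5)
Your proposal is correct, and it reaches the formula by a genuinely different route for the key step. The paper counts the $k$-cliques \emph{directly}: it enumerates the index set, computing $4|\Gamma_k| = 4\sum_{i=1}^m |\widetilde{\Lambda}_k|^{i-1}\cdot 2^{2k(m-i)}$, and then collapses this via the geometric-series formula to $\frac{2^{2n}}{k}\bigl(1-(1-\tfrac{4k}{2^{2k}})^{n/k}\bigr)$; your singleton count is identical to the paper's, but for the cliques you instead use a partition (fiber-counting) argument: each non-singleton matrix lies in exactly one clique $Q_{\bfz}^{(b_1,b_2)}$ (its position $i$ being forced to the leftmost $\Lambda_k$-block because $(\bfu_1,\bfu_2)\in\widetilde{\Lambda}_k^{\,i-1}$), and each clique has exactly $k$ members, so the number of cliques is $\frac{1}{k}\bigl(4^n-(4^k-4k)^{n/k}\bigr)$. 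What your route buys: it avoids the geometric series entirely, and it counts the cliques \emph{as sets}, so you never need the implicit injectivity assumption the paper makes (that distinct tuples $(\bfz,b_1,b_2)$ index distinct cliques); it also establishes the stronger structural fact that the singletons and $k$-cliques partition $\F_2^{2\times n}$. What the paper's route buys: it is purely mechanical and does not require proving the uniqueness ("exactly one clique") claim — the paper's covering lemma only shows membership in \emph{some} clique, which suffices for its enumeration. One shared caveat, which you correctly flag as the nontrivial point: both your "no internal collisions among the $k$ values of $j$" claim and the paper's identity $|\Lambda_k|=4k$ require $k\neq 2$ (for $k=2$ one has $\bfe_1^{(1)}=\bfe_2^{(0)}$, so the nominal $4k$ index tuples collapse to $4$ distinct pairs); this is harmless here since the lemma is ultimately applied with $k=\lceil\frac{1}{6}\log_2 n\rceil\ge 3$, but strictly speaking both proofs carry this implicit assumption.
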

\begin{proof}
    By construction, the number of singletons is $|\widetilde{\Lambda}_k|^m$, while the number of $k$-cliques is 
    $$ 4|\Gamma|  = 4 \sum_{i=1}^m |\widetilde{\Lambda}_k|^{i-1}  \cdot 2^{2k(m-i)}  = 2^{2+2km-2k} \sum_{i=1}^m \Big(\frac{|\widetilde{\Lambda}_k|}{2^{2k}}\Big)^{i-1}.$$
    Using the formula for the sum of geometric series we have that $$\sum_{i=1}^m \Big(\frac{|\widetilde{\Lambda}_k|}{2^{2k}}\Big)^{i-1} =
    \frac{1-\Big(\frac{|\widetilde{\Lambda}_k|}{2^{2k}}\Big)^m}{1-\frac{|\widetilde{\Lambda}_k|}{2^{2k}}}.$$
    Recall that $|\widetilde{\Lambda}_k|=2^{2k}-4k$ and $n=m\cdot k$. Thus, the number of $k$-cliques is
    $$ 2^{2+2km-2k} \cdot \frac{1-\big(1-\frac{4k}{2^{2k}}\big)^m }{\frac{4k}{2^{2k}}} = \frac{2^{2n}}{k} \Big( 1- \Big(1-\frac{4k}{2^{2k}}\Big)^{n/k} \Big),$$
    and the size of $\mathcal{Q}(n,k)$ is
    $$ 2^{2n}\Big( 1-\frac{4k}{4^{k}} \Big)^{n/k} 
    +
    \frac{2^{2n}}{k}\Big( 1- \Big( 1-\frac{4 k}{4^{k}} \Big)^{n/k} \Big). $$
\end{proof}

\begin{corollary}
    Let $\mathcal{C}\subseteq \F_2^{2\times n}$ be a $(2;2)_{\mathbb{D}}$-correcting code. The redundancy of $\mathcal{C}$ is at least $ \lceil\log_2\log_2 n \rceil +O(1)$.
\end{corollary}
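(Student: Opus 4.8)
The plan is to feed the clique cover $\mathcal{Q}(n,k)$ into the independent-set bound and then tune the block length $k$. By Lemma~\ref{indp_set} every independent set of $\mathcal{G}(n)$—in particular the code $\mathcal{C}$—has size at most $|\mathcal{Q}(n,k)|$, so $|\mathcal{C}|\le A^{\mathbb{D}}_n(2;2)\le|\mathcal{Q}(n,k)|$ for every admissible $k$. Since the redundancy of $\mathcal{C}$ is $2n-\log_2|\mathcal{C}|\ge 2n-\log_2|\mathcal{Q}(n,k)|$, it suffices to exhibit a single $k$ for which $\log_2|\mathcal{Q}(n,k)|\le 2n-\log_2\log_2 n+O(1)$.

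First I would rewrite Lemma~\ref{lem:clique_cover_size}. Writing $f(n)=\big(1-\tfrac{4k}{4^{k}}\big)^{n/k}$, its size factors as $|\mathcal{Q}(n,k)|=2^{2n}\big[\tfrac1k+f(n)\big(1-\tfrac1k\big)\big]$, so
$$\log_2|\mathcal{Q}(n,k)|=2n-\log_2 k+\log_2\big(1+(k-1)f(n)\big).$$
The term $-\log_2 k$ is the source of the $\log_2\log_2 n$ saving, and the whole argument reduces to showing that the residual term $\log_2(1+(k-1)f(n))$ is negligible.

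Next I would fix $\epsilon\in(0,1)$ and set $k=\lceil\tfrac12(1-\epsilon)\log_2 n\rceil$, so that $n^{1-\epsilon}\le 4^{k}\le 4\,n^{1-\epsilon}$ while $k=\Theta(\log_2 n)$. Using $|\ln(1-x)|\ge x$ for $x\in(0,1)$ gives
$$\ln f(n)=\tfrac nk\ln\!\big(1-\tfrac{4k}{4^k}\big)\le-\tfrac nk\cdot\tfrac{4k}{4^k}=-\tfrac{4n}{4^k}\le-n^{\epsilon}.$$
Because $\ln k=O(\log_2\log_2 n)$ grows far slower than $n^{\epsilon}$, we obtain $\ln\!\big(k f(n)\big)=\ln k+\ln f(n)\to-\infty$, hence $(k-1)f(n)\le k f(n)\to 0$ and $\log_2(1+(k-1)f(n))=o(1)$. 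On the other hand $\log_2 k\ge\log_2\log_2 n-\log_2\tfrac{2}{1-\epsilon}=\log_2\log_2 n-O(1)$. Substituting back yields $\log_2|\mathcal{Q}(n,k)|\le 2n-\log_2\log_2 n+O(1)$, and therefore the redundancy of $\mathcal{C}$ is at least $\log_2\log_2 n-O(1)\ge\lceil\log_2\log_2 n\rceil+O(1)$.

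The main obstacle is the asymptotic estimate $k f(n)=o(1)$: one must show that $f(n)$ decays quickly enough to absorb the polylogarithmic factor $k$. This is precisely why the exponent $\tfrac12(1-\epsilon)\log_2 n$ is delicate—enlarging $k$ drives $4k/4^k$ to $0$ so slowly that $f(n)\to 1$ and the bound collapses, whereas shrinking $k$ makes $\log_2 k$ too small to reach $\log_2\log_2 n$. Turning $\ln(k f(n))\to-\infty$ into a rigorous statement (the $\ln k/\ln f(n)\to0$ computation) is routine once the inequality $|\ln(1-x)|\ge x$ is in hand, but it is the technical core of the proof.
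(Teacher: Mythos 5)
Your proposal is correct and follows essentially the same route as the paper: both feed the clique cover $\mathcal{Q}(n,k)$ of Lemma~\ref{lem:clique_cover_size} into the independent-set bound of Lemma~\ref{indp_set}, choose $k=\Theta(\log_2 n)$, and use $\ln(1-x)\le -x$ to show that $f(n)$ decays fast enough for the $\tfrac1k$ term to dominate, yielding redundancy at least $\log_2\log_2 n - O(1)$. The only differences are cosmetic: the paper fixes $k=\lceil\tfrac16\log_2 n\rceil$ and bounds $f(n)+\tfrac1k\le\tfrac2k$ crudely, whereas you take $k=\lceil\tfrac12(1-\epsilon)\log_2 n\rceil$ and track the residual term as $o(1)$.
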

\begin{proof}
    Define $f(n) \triangleq \left(1-\frac{4k}{4^k}\right)^{n/k}.$
    Then $\mathcal{Q}(n,k)= 4^n\left(f(n)+\frac{1}{k}\cdot\left(1-f(n)\right)\right).$
    Since $f(n) \in (0,1)$ for any $n,k>1$, then
    $\mathcal{Q}(n,k)\leq 4^n\left(f(n)+\frac{1}{k}\right) $. Using the inequality $(1-x)\le e^{-x}$ for $x\in(0,1)$, we obtain
    $$f(n) \le \exp\!\left(-\frac{4k}{4^k}\cdot \frac{n}{k}\right) = \exp\!\left(-\frac{4n}{4^k}\right).$$

    Set $k=\left\lceil \frac{1}{6}\log_2 n\right\rceil$. Then $2^{2k} \ge n^{1/3},$
    and hence
    $f(n)
    \le \exp\!\left(-4n^{2/3}\right).$
    In particular, for sufficiently large $n$, $\exp(-4n^{2/3}) \le 1/k$, and therefore
    $$
    \mathcal{Q}(n,k)\le 4^n\left(\frac{1}{k}+\exp(-4n^{2/3})\right)
    \le 4^n\cdot \frac{2}{k}.
    $$

Finally, since $k=\left\lceil \frac{1}{6}\log_2 n\right\rceil \ge \frac{1}{6}\log_2 n$, we get
$$A_n^{\mathbb{D}}(2;2)\leq  \mathcal{Q}(n,k)\le 4^n\cdot \frac{12}{\log_2 n}.$$
\end{proof}

Recall that Construction~\ref{first_const} achieves redundancy $2\lceil \log_2\log_2 n\rceil\! +\! O(1)$, which is within a factor of $2$ of the lower bound.

\section{$(\ell;1)_{\mathbb{SID}}$-Correcting Code}
\label{sec:1,1_SID}
Let $\ell\! \in\! \mathbb{N}$. Consider the case where for an input $\X\!\in\! \mathbb{F}_2^{\ell \times n}$ to the sum channel,
the output matrix $\Y$ is obtained by introducing
one edit error in one of the rows of $\X^ + $. 
We will present a code construction that enables the correction of a single edit.  
When the error is an insertion or a deletion, the original input can be reconstructed directly, as previously explained. Hence, the nontrivial case to consider is when the error is a substitution.  

We begin with $\ell=2$, describing a $(2;1)_{\mathbb{SID}}$-correcting code and proving optimality. We then extend the construction to all $\ell\in\mathbb{N}$, obtaining redundancy within one bit of optimal.

\subsection{Construction and Optimality for $\ell=2$}

A simple construction is obtained by fixing the parities of the two rows of the input matrix, yielding an $(2;1)_{\mathbb{SID}}$-correcting code with only two redundancy bits. 
Indeed, a single substitution in $\X^{+}$ flips exactly two parities: the parity of the corrupted row and the parity of the corrupted column. 
By definition of the sum matrix, every column of $\X^{+}$ has even parity. 
Consequently, the error location is determined by the unique row and column that violate the parity constraint, and the error is corrected by flipping the bit at their intersection. One can prove that this construction achieves \emph{strictly} optimal redundancy
This construction is formalized in Construction~\ref{simple_construction}.

\begin{construction}\label{simple_construction}
    Let $n\in\mathbb{N}$. Fix $b_1,b_2\in \mathbb{F}_2$. 
    \begin{equation*}
    \begin{split}
        \mathcal{C}_{b_1,b_2}(n) = 
        \Big\{ & 
            \X = (\x_1,\x_2)\in \mathbb{F}_2^{2 \times n} : \\ & \forall i\in[2],\  parity(\x_i)=b_i
        \Big\}.
    \end{split}
    \end{equation*}
\end{construction}

The following theorem can be obtained as previously explained.
\begin{theorem}
     The code $\mathcal{C}_{b_1,b_2}(n)$ is an $(2;1)_{\mathbb{SID}}$-correcting code of size $2^{2n-2}$.
\end{theorem}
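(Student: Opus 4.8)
The plan is to verify the cardinality by direct counting and to establish the single-edit correcting property by showing that the balls $B^{\mathbb{SID}}_{1}(\X_1)$ and $B^{\mathbb{SID}}_{1}(\X_2)$ are disjoint for distinct codewords, which I do by exhibiting a decoder that recovers the originating codeword from the received matrix alone. For the size, the constraint $parity(\x_i)=b_i$ selects exactly $2^{n-1}$ of the $2^{n}$ vectors in each row, and the two row constraints are independent, so $|\mathcal{C}_{b_1,b_2}(n)|=2^{n-1}\cdot 2^{n-1}=2^{2n-2}$.

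For the disjointness, suppose $\Y\in B^{\mathbb{SID}}_{1}(\X_1)\cap B^{\mathbb{SID}}_{1}(\X_2)$; I will argue that $\X_1=\X_2$. Two invariants drive the argument: every column of a valid sum matrix $(\x_1,\x_2,\x_1+\x_2)$ has even parity, and the three row parities of any codeword equal $b_1$, $b_2$, and $b_1+b_2$. First, the row lengths of $\Y$ pin down the error type, since a substitution keeps all three rows of length $n$, whereas an insertion (resp.\ deletion) produces a single row of length $n+1$ (resp.\ $n-1$). Hence $\X_1$ and $\X_2$ must yield $\Y$ through the same edit type, and it suffices to recover a unique input within each type.

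In the indel case the corrupted row is the unique one whose length differs from $n$; the other two rows are intact, and since any two rows of a sum matrix determine the third by XOR, the input $\X$ is reconstructed directly, as already explained. The substitution case is the crux: a single substitution at position $(r,j)$ of $\X^{+}$ flips precisely the parity of column $j$ (making it odd) and of row $r$. The decoder therefore reads off the corrupted column as the unique odd-parity column of $\Y$ and the corrupted row as the unique row whose parity disagrees with its prescribed value among $b_1,b_2,b_1+b_2$, then flips $\y_r[j]$; if no column is odd, no substitution occurred and $\Y$ is already a sum matrix. In every situation the recovered input is a function of $\Y$ alone, forcing $\X_1=\X_2$ and establishing disjointness.

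The main obstacle I anticipate is not a single hard step but the bookkeeping of the case analysis: one must check that a substitution makes exactly one column odd and exactly one row parity fail, so that the decoder is never ambiguous, and that the zero- versus one-substitution sub-cases and the three edit types are mutually exclusive via the length argument. Once the even-column-parity invariant of sum matrices is in hand, each of these verifications is routine.
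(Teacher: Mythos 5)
Your proof is correct and follows essentially the same route as the paper: the size follows by counting $2^{n-1}$ choices per fixed-parity row, and the correcting property is established by the same parity-based decoder (row lengths separate indels, which are fixed by XOR-ing the two intact rows; a substitution is located as the intersection of the unique odd-parity column and the unique row whose parity disagrees with $b_1$, $b_2$, or $b_1+b_2$). Your extra bookkeeping—checking that the edit types and the zero- versus one-substitution sub-cases are mutually exclusive—is exactly what the paper's phrase ``as previously explained'' compresses, so there is no substantive difference.
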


To upper bound the size of any $(2;1)_{\mathbb{SID}}$-correcting code, we consider the \emph{confusability graph} $\mathcal{G}(n)$ whose vertex set is $\F_2^{2\times n}$.
Two vertices $\X_1,\X_2\in\F_2^{2\times n}$ are adjacent if and only if
$B^{\mathbb{SID}}_{1}(\X_1)\cap B^{\mathbb{SID}}_{1}(\X_2)\neq\emptyset$.
Equivalently, $\X_1$ and $\X_2$ are adjacent if and only if $\ham(\X_1,\X_2)\le 2$.
We then use a clique-cover argument to obtain an upper bound on the code size.

\begin{lemma}\label{clique_const}
    For any $\x_1,\x_2\in\F^{n-1}$, define the set of matrices
    \begin{equation*}
    \begin{split}
        Q(\x_1,\x_2) =& \{
    (\x_1\circ 0 , \x_2\circ 0),\ 
    (\x_1\circ 1 , \x_2\circ 0),\\  &
    (\x_1\circ 0 , \x_2\circ 1),\
    (\x_1\circ 1 , \x_2\circ 1)
    \} \subseteq \F^{2\times n}.
    \end{split}
    \end{equation*}
    Then the set $ \mathcal{Q} = \{ Q(\x_1,\x_2)\ :\ \x_1,\x_2\in\F_2^{n-1} \} $ forms a clique cover of size $2^{2n-2}$.
\end{lemma}
\begin{proof}
    Let $\x_1,\x_2\in\F^{n-1}$.
    We first prove that the set $Q(\x_1,\x_2)$ forms a clique.
    For any two distinct matrices $\boldsymbol{A},\boldsymbol{B}\in Q(\x_1,\x_2)$, their sum matrices $\boldsymbol{A}^{ + }$ and $\boldsymbol{B}^{ + }$ differ in exactly two entries, both in the last column. Equivalently, $\ham(\X_1,\X_2) = 2$.
    This implies that there is an edge between every pair of distinct matrices in $Q(\x_1,\x_2)$; therefore, the subgraph induced by $Q(\x_1,\x_2)$ is a clique.

     It can be easily seen that each matrix belongs to a set $\{ Q(\x_1,\x_2)\}_{\x_1,\x_2\in\F^{n-1}}$, hence, $\mathcal{Q}$ is a clique cover of size $2^{2n-2}$.
\end{proof}

\begin{corollary}
    For any $b_1,b_2\in \mathbb{F}_2$, $\mathcal{C}_{b_1,b_2}(n)$ is optimal, i.e., $|\mathcal{C}_{b_1,b_2}(n)|=A^\mathbb{SID}_n(2;1)$.
\end{corollary}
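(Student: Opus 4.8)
The plan is to prove optimality by a standard sandwich argument. The achievability theorem just stated already supplies a valid $(2;1)_{\mathbb{SID}}$-correcting code of size $2^{2n-2}$, namely $\mathcal{C}_{b_1,b_2}(n)$ itself, so $A^{\mathbb{SID}}_n(2;1)\ge 2^{2n-2}$. It therefore remains only to establish the matching upper bound $A^{\mathbb{SID}}_n(2;1)\le 2^{2n-2}$; once both inequalities are in hand, the optimum is pinned down exactly at $|\mathcal{C}_{b_1,b_2}(n)|=2^{2n-2}$ for every choice of $b_1,b_2$.

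For the upper bound I would proceed through the confusability graph $\mathcal{G}(n)$. By definition, any $(2;1)_{\mathbb{SID}}$-correcting code is an independent set in $\mathcal{G}(n)$: two distinct codewords at Hamming distance at most $2$ would have intersecting single-edit balls, which the correcting property forbids, so no two codewords can be adjacent. I would then invoke the clique-cover bound of Lemma~\ref{indp_set}, which guarantees that the size of any independent set is at most the number of cliques in a clique cover of $\mathcal{G}(n)$. Lemma~\ref{clique_const} furnishes exactly such a cover $\mathcal{Q}=\{Q(\x_1,\x_2) : \x_1,\x_2\in\F_2^{n-1}\}$, whose size is computed there to be $2^{2n-2}$. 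Chaining these two facts gives $A^{\mathbb{SID}}_n(2;1)\le |\mathcal{Q}| = 2^{2n-2}$.

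Combining the pieces, the lower bound from achievability and the upper bound from the clique cover coincide, forcing $A^{\mathbb{SID}}_n(2;1)=2^{2n-2}=|\mathcal{C}_{b_1,b_2}(n)|$ for every $b_1,b_2\in\mathbb{F}_2$, which is precisely the claimed optimality.

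I do not expect a genuine obstacle here: all the technical content was already discharged when the clique cover was constructed and its cardinality evaluated in Lemma~\ref{clique_const}. The only point deserving a moment's attention is the reduction of single-edit confusability to the purely combinatorial adjacency condition $\ham(\X_1,\X_2)\le 2$ on the sum channel; but this equivalence was already settled when $\mathcal{G}(n)$ was defined, so the corollary follows as a one-line consequence of the two preceding lemmas.
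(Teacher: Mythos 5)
Your proof is correct and follows essentially the same route as the paper: the upper bound comes from the clique cover of Lemma~\ref{clique_const} together with Lemma~\ref{indp_set}, and this is combined with the achievability of size $2^{2n-2}$. The only cosmetic difference is that the paper routes the upper bound through substitution-only codes, bounding $A^{\mathbb{S}}_n(2;1)$ first and then using $A^{\mathbb{SID}}_n(2;1)\le A^{\mathbb{S}}_n(2;1)$, whereas you apply the independent-set bound directly to the $\mathbb{SID}$ confusability graph; the two are interchangeable here.
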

\begin{proof}    
    In Lemma~\ref{clique_const} we construct a clique cover of size $2^{2n-2}$. By Lemma~\ref{indp_set}, it follows that any independent set , i.e. $(2;1)_{\mathbb{S}}$-correcting code, is of size  at most $2^{2n-2}$.

    Since every $(2;1)_{\mathbb{SID}}$-correcting code is also an $(2;1)_{\mathbb{S}}$-correcting code, then $A^\mathbb{SID}_n(2;1)\leq A^\mathbb{S}_n(2;1)$ and the theorem is proven.
\end{proof}

\subsection{Construction and Upper Bound for Arbitrary $\ell$}

A natural extension of the $\ell=2$ construction to an arbitrary~$\ell$ can be obtained by fixing the parity of each of the $\ell$ rows. This approach incurs a redundancy of $\ell$ bits. 
Now, we present a construction that achieves significantly improved redundancy. 
We again draw on ideas from tensor-product codes, as presented in~\cite{wolf2006introduction}.

%\subsection{Construction and Optimality for $\ell=2$}

\begin{construction}\label{TP_sub}
     Let $\ell,n\in\mathbb{N}$ and let $\mathcal{C}_{\mathbb{S}_1}$ be a binary single-substitution-correcting code of length $\ell$. Then, the following code $\mathcal{C}_{TP}(\ell,n,\mathcal{C}_{\mathbb{S}_1})$ is defined to be
     %an $(1,1,\ell)_{\mathbb{SID}}$-correcting code,
    $$ \mathcal{C}_{TP}(\ell,n,\mathcal{C}_{\mathbb{S}_1}) = 
    \Big\{
    \X\in \F_2^{\ell\times n} \ :\ (b_1,b_2, \ldots , b_\ell)\in\mathcal{C}_{\mathbb{S}_1} 
    \Big\},
    $$
    where $b_i =parity(\x_i)$.
\end{construction}

\begin{comment}
    \begin{construction}\label{construction1}
    Let $n\in\mathbb{N}$. Fix $b_1,b_2\in \mathbb{F}_2$. 
    \begin{equation*}
    \begin{split}
        \mathcal{C}_{b_1,b_2}(n) = 
        \Big\{ & 
            \X = (\x_1,\x_2)\in \mathbb{F}_2^{2 \times n} : \\ & \forall i\in[2],\  parity(\x_i)=b_i
        \Big\}.
    \end{split}
    \end{equation*}
\end{construction}
\end{comment}

\begin{theorem}
    $ \mathcal{C}_{TP}(\ell,n,\mathcal{C}_{\mathbb{S}_1})$ is a $(\ell;1)_{\mathbb{SID}}$-correcting code.
\end{theorem}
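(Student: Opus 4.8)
The plan is to exhibit an explicit decoder $\mathrm{Dec}$ and prove that for every codeword $\X\in\mathcal{C}_{TP}(\ell,n,\mathcal{C}_{\mathbb{S}_1})$ and every received matrix $\Y\in B^{\mathbb{SID}}_{1}(\X)$ we have $\mathrm{Dec}(\Y)=\X$. This suffices to establish the claim: if some $\Y$ were to lie in $B^{\mathbb{SID}}_{1}(\X_1)\cap B^{\mathbb{SID}}_{1}(\X_2)$ with $\X_1\neq\X_2$, the decoder would be forced to output both $\X_1$ and $\X_2$, a contradiction, so the error balls are pairwise disjoint. As noted before the construction, a single insertion or deletion is detected by comparing row lengths and the affected row is then reconstructed by summing the remaining (error-free) rows; hence I may assume the single error is a substitution and that every row of $\Y$ has length $n$.

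The decoder rests on two localization facts. First, every column of $\X^{+}$ sums to $0$ over $\F_2$, so a single substitution flips the parity of exactly one column; scanning the columns of $\Y$ therefore either reveals a unique column $j^\ast$ of odd parity, witnessing that a substitution occurred in column $j^\ast$, or shows that all columns are even, in which case no substitution occurred, $\Y=\X^{+}$, and I output the first $\ell$ rows. Second, set $b_i' = parity(\y_i)$ for $i\in[\ell]$. A substitution in data row $i$ flips $b_i$ and leaves the other data-row parities unchanged, whereas a substitution in the parity row leaves all of $b_1,\dots,b_\ell$ intact. Thus, in the error-containing case I compare $(b_1',\dots,b_\ell')$ against $\mathcal{C}_{\mathbb{S}_1}$: if it is a codeword, the substitution was in the parity row, the data rows are untouched, and I output them directly; if it is not a codeword, the single-substitution decoder of $\mathcal{C}_{\mathbb{S}_1}$ returns the unique flipped coordinate $i^\ast$, and I correct the input by flipping $y_{i^\ast,j^\ast}$ and output the corrected first $\ell$ rows.

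The step I expect to require the most care is establishing that these two subcases are genuinely distinguishable and always resolved correctly. The key is that $\mathcal{C}_{\mathbb{S}_1}$ has minimum Hamming distance at least $3$: a substitution in a data row moves $(b_1,\dots,b_\ell)$ to a word at Hamming distance exactly $1$ from the codeword, which therefore cannot itself be a codeword, so it is never mistaken for a parity-row error; conversely a parity-row substitution leaves the parity vector equal to the original codeword, so it is correctly attributed to the parity row rather than to a data row. Combined with the column-parity test, which certifies that an error did occur and pins down its column $j^\ast$, this guarantees that the indices $i^\ast$ and $j^\ast$ produced by the decoder are exactly those of the true error, so that $\mathrm{Dec}(\Y)=\X$ in every case and the proof is complete.
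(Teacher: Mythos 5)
Your proof is correct and follows essentially the same approach as the paper's: reduce to the substitution case via row lengths, locate the column via the all-even column-parity property of $\X^{+}$, and locate the row (or attribute the error to the parity row) by decoding the row-parity vector with $\mathcal{C}_{\mathbb{S}_1}$. Your explicit use of the minimum-distance-$3$ property to show the two subcases are distinguishable is a welcome elaboration of a step the paper leaves implicit.
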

\begin{proof}
We present a decoding algorithm for the code $\mathcal{C}_{TP}(\ell,n,\mathcal{C}_{\mathbb{S}_1})$, showing it is an $(\ell;1)_{\mathbb{SID}}$-correcting code.
Let $\X = (\x_1,\dots,\x_\ell)\in \mathcal{C}_{TP}(\ell,n,\mathcal{C}_{\mathbb{S}_1})$ be the transmitted matrix whose sum matrix is $\X^+$ and let $\Y = (\y_1,\dots,\y_{\ell+1})\in \F_2^{(\ell+1)\times n}$ be the sum-channel output that differs from $\X^{ + }$ by a single substitution.

Note that a single-bit substitution flips exactly two parities: that of the corrupted row and that of the corrupted column.
From the construction of the sum matrix, we have that the parity of each column in $\X^{ + }$ is 0.
We therefore locate the unique column with non-zero parity and denote it by $\widehat{j}$.
Then we compute the row parities of $\Y$ to obtain ${\bfb'}=({b}'_1,\ldots,{b}'_\ell)$ with ${b}'_i=parity(\y_{i})$. It holds that $\ham\big((b_1,\ldots, b_\ell), ({b}'_1,\ldots,{b}'_\ell)\big) \leq 1$. 
Therefore, using the decoder of $\mathcal{C}_{\mathbb{S}_1}$, 
we either identify an erroneous row $\widehat{i}\in[\ell]$ or conclude that no row among the first $\ell$ is corrupted (which occurs when the substitution is in the last row).
If an erroneous row $\widehat{i}\in[\ell]$ is found, flip the entry $(\widehat{i},\widehat{j})$ of $\Y$ and output its first $\ell$ rows.
\end{proof}

\begin{comment}

\begin{lemma}\label{code2-size}
    For any $\ell,n\in\mathbb{N}$ and a single-substitution-correcting code $\mathcal{C}_{\mathbb{S}_1}$ over $\F_2^\ell$,
    $| \mathcal{C}_{TP}(\ell,n,\mathcal{C}_{\mathbb{S}_1}) | =
    |\mathcal{C}_{\mathbb{S}_1}|\cdot 2^{\ell(n-1)}.
    $
\end{lemma}
\begin{proof}
    Each $\bfb\in\mathcal{C}_{\mathbb{S}_1}$ fixes the parity of every row of matrices in $\mathcal{C}_{TP}$. 
    Each row with a prescribed parity has $2^{n-1}$ possibilities, independently of the others. 
    Hence the number of matrices corresponding to a given vector $\bfb$ is $(2^{n-1})^{\ell}$, and therefore $\lvert \mathcal{C}_{TP} \rvert \;=\; \lvert \mathcal{C}_{\mathbb{S}_1} \rvert \cdot 2^{\ell(n-1)}.$
\end{proof}
\end{comment}

Next, we state the redundancy of our construction.
\begin{corollary}
    There exists a $(\ell;1)_{\mathbb{SID}}$-correcting code with redundancy of at most $\lceil \log_2 (\ell+1)\rceil$ bits.
\end{corollary}
\begin{proof}
    Let $\mathcal{C}_{\mathbb{S}_1}\subseteq \F_2^\ell$ be a single-substitution-correcting code. Then each $\bfb\in\mathcal{C}_{\mathbb{S}_1}$ fixes the parity of every row of matrices in $\mathcal{C}_{TP}$. 
    Each row with a prescribed parity has $2^{n-1}$ possibilities, independently of the others. 
    Hence the number of matrices corresponding to a given vector $\bfb$ is $(2^{n-1})^{\ell}$, and therefore $\lvert \mathcal{C}_{TP} \rvert \;=\; \lvert \mathcal{C}_{\mathbb{S}_1} \rvert \cdot 2^{\ell(n-1)}.$

    Recall that a length-$\ell$ shortened Hamming code $\mathcal{C}_{Ham}$ with minimum distance $d=3$ can correct a single substitution, and is of size $2^{\ell- \lceil\log_2(\ell+1)\rceil}$. 
    Thus, $|\mathcal{C}_{TP}(\ell,n,\mathcal{C}_{Ham})|=2^{n\ell- \lceil\log_2(\ell+1)\rceil}$.
\end{proof}

Lastly, we derive an upper bound on the size of any $(\ell;1)_{\mathbb{SID}}$-correcting code, showing that the redundancy of our construction is within a single bit of optimality for any $\ell$. We work with the $1$-substitution \emph{sphere} around $\X$, i.e., the set of outputs obtained by exactly one substitution, that is, $B^{\mathbb{S}}_{1}(\X)\setminus \{\X\}$.

\begin{claim}\label{1S-sphere-size}
    For any $\X \in \F_2^{\ell\times n}$, $|B^{\mathbb{S}}_1(\X)\setminus\{\X\} | = (\ell+1)\cdot n$.
\end{claim}
\begin{proof}
    A single substitution in the sum matrix of $\X$, $\X^+\in \F_2^{(\ell+1)\times n}$, is specified by a choice of a row in $[\ell+1]$ and a choice of a column in $[n]$. Thus, the number of distinct erroneous matrices obtainable from $\mathbf{X}^{+}$ is $(\ell+1)n$. 
\end{proof} 

\begin{claim}\label{output_space_size}
    $|\bigcup_{\mathbf{X}\in\F_2^{\ell\times n}} \big(B^{\mathbb{S}}_{1}(\X)\setminus \{\X\}\big)| = 2^{n \ell}\cdot n$.
\end{claim}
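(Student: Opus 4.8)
The plan is to give an exact parity characterization of the union and then count it directly. Write $\X^+$ for the sum matrix of $\X$; by definition every column of $\X^+$ sums to $0$, so the set of all error-free sum-channel outputs is exactly the set $V \subseteq \F_2^{(\ell+1)\times n}$ of matrices all of whose $n$ columns have even parity. Since the first $\ell$ rows may be chosen freely and the last row is then forced, $|V| = 2^{n\ell}$. The set $B^{\mathbb{S}}_{1}(\X)\setminus\{\X\}$ is precisely the set of matrices obtained from $\X^+$ by \emph{exactly} one substitution (consistent with Claim~\ref{1S-sphere-size}), so the union in question ranges over all such single-substitution outputs across all valid sum matrices.

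Next I would observe how a single substitution acts on the column parities. Flipping the bit of $\X^+$ in row $i$ and column $j$ flips the parity of column $j$ and leaves every other column's parity unchanged. Hence any $\Y$ lying in the union has exactly one column of odd parity—namely the column where the substitution occurred—and $n-1$ columns of even parity. This shows the union is contained in the set $U$ of matrices in $\F_2^{(\ell+1)\times n}$ having exactly one odd-parity column.

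For the reverse inclusion, take any $\Y\in U$ and let $j$ be its unique odd-parity column. Flipping any single bit of column $j$ produces a matrix all of whose columns have even parity, i.e.\ a member of $V$, hence a valid sum matrix $\X^+$; reversing that flip exhibits $\Y$ as the result of one substitution applied to $\X^+$. Thus $U$ is contained in the union, and the two sets coincide.

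Finally I would count $|U|$: a matrix in $U$ is determined by choosing which of the $n$ columns is the odd one ($n$ choices), choosing an odd-parity vector in $\F_2^{\ell+1}$ for that column ($2^\ell$ choices), and an even-parity vector in $\F_2^{\ell+1}$ for each of the remaining $n-1$ columns ($2^\ell$ choices each), giving $n\cdot 2^\ell\cdot(2^\ell)^{n-1} = n\cdot 2^{n\ell}$, as claimed. There is no real obstacle here; the only point requiring care is verifying both inclusions in the characterization ``union $=$ matrices with exactly one odd-parity column,'' since distinct inputs $\X$ may produce the same output $\Y$—but this multiplicity is irrelevant to the cardinality of the union.
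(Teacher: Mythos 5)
Your proof is correct and follows essentially the same approach as the paper: characterize the union as the set of matrices in $\F_2^{(\ell+1)\times n}$ with exactly one odd-parity column, then count that set to get $n\cdot 2^{n\ell}$. The only difference is that you explicitly verify both inclusions of this characterization (the paper asserts it directly), which is a welcome bit of extra rigor but not a different argument.
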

\begin{proof}
    We now compute the size of the output space of the sum channel in the case of exactly one substitution. 
    Formally, we evaluate
    $|\bigcup_{\mathbf{X}\in\mathbb{F}_2^{\ell\times n}} B^{\mathbb{S}}_{1}(\mathbf{X})|.$
    The output space consists of all matrices in $\mathbb{F}_2^{(\ell+1)\times n}$ for which exactly one column has odd parity and all other columns have even parity. 
    There are $n$ choices for the column with odd parity, and for any fixed choice, the remaining degrees of freedom yield $2^{\ell n}$ matrices. 
    Therefore, the total number of such matrices is $n\cdot 2^{\ell n}$.
\end{proof}

\begin{theorem}
    $A^\mathbb{S}_n(\ell;1) \leq \lfloor \frac{2^{n\ell}}{\ell+1}\rfloor$. 
\end{theorem}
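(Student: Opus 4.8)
The goal is to upper-bound the size of any single-substitution-correcting code $\mathcal{C} \subseteq \F_2^{\ell \times n}$ for the sum channel. The natural approach is a sphere-packing (Hamming-type) bound, but with two refinements dictated by the structure of the sum channel: first, the relevant output space is not all of $\F_2^{(\ell+1)\times n}$ but only the ``valid'' outputs counted in Claim~\ref{output_space_size}; second, I must be careful about whether to use punctured spheres $B^{\mathbb{S}}_1(\X)\setminus\{\X\}$ or full balls. The plan is to run the packing argument on the punctured spheres, since these are exactly the sizes computed in Claim~\ref{1S-sphere-size} and since disjointness of the full balls $B^{\mathbb{S}}_1$ is what the correcting-code condition guarantees.

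First I would recall that, by definition, $\mathcal{C}$ being an $(\ell;1)_{\mathbb{S}}$-correcting code means the balls $\{B^{\mathbb{S}}_1(\X)\}_{\X\in\mathcal{C}}$ are pairwise disjoint. In particular the punctured spheres $\{B^{\mathbb{S}}_1(\X)\setminus\{\X\}\}_{\X\in\mathcal{C}}$ are pairwise disjoint as well. Next I would observe that every punctured sphere is contained in the global output space $U \triangleq \bigcup_{\X\in\F_2^{\ell\times n}}\big(B^{\mathbb{S}}_1(\X)\setminus\{\X\}\big)$, whose cardinality is $2^{n\ell}\cdot n$ by Claim~\ref{output_space_size}. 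Summing the sphere sizes from Claim~\ref{1S-sphere-size}, each contributing exactly $(\ell+1)\cdot n$, and using disjointness together with containment in $U$, I get
\begin{equation*}
|\mathcal{C}|\cdot (\ell+1)\cdot n \;=\; \sum_{\X\in\mathcal{C}} \big|B^{\mathbb{S}}_1(\X)\setminus\{\X\}\big| \;\le\; |U| \;=\; 2^{n\ell}\cdot n.
\end{equation*}
Dividing by $(\ell+1)\cdot n$ yields $|\mathcal{C}|\le \frac{2^{n\ell}}{\ell+1}$, and since $|\mathcal{C}|$ is an integer I may take the floor, giving the stated bound.

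The only subtle point — and the step I would scrutinize most — is the disjointness of the punctured spheres and the fact that the centers $\X$ themselves need not be ignored in the count. If the argument instead used full balls $B^{\mathbb{S}}_1(\X)$, I would be double-counting the centers or running into the fact that centers live in a differently-sized space; puncturing cleanly sidesteps this because the sphere size is uniform and the union $U$ is computed over exactly the punctured spheres. I should double-check that removing $\{\X\}$ does not break pairwise disjointness (it cannot, since removing elements only shrinks sets) and that no valid output of the one-substitution channel is omitted from $U$ (it is not, by the definition of $U$). With these verifications the packing inequality above is immediate, and the theorem follows.
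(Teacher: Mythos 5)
Your proof is correct and takes essentially the same route as the paper: a sphere-packing argument that divides the size of the valid output space, $n\cdot 2^{n\ell}$ from Claim~\ref{output_space_size}, by the uniform punctured-sphere size $(\ell+1)n$ from Claim~\ref{1S-sphere-size}. The paper states this tersely as ``the ratio between the output space size and the sphere size,'' while you spell out the disjointness of the punctured spheres and their containment in the output space — the same argument, just with the packing inequality made explicit.
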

\begin{proof}
    Let $\code$ be a $(\ell;1)_{\mathbb{S}}$-correcting code.
    By a sphere-packing bound argument~\cite{Hamming1950}, the size of any $(\ell;1)_{\mathbb{S}}$-correcting code is upper bounded by the ratio between the output space size and the size of an $1$-substitution sphere. By Claim~\ref{1S-sphere-size}, the size of such sphere is $(\ell+1)n$.
    Moreover, from Claim \ref{output_space_size}, the size of the output space is $n\cdot 2^{\ell n}$.
    Hence, $| \code |\leq \lfloor\frac{n\cdot 2^{\ell n}}{(\ell+1)n}\rfloor = \lfloor \frac{2^{n\ell}}{\ell+1}\rfloor$.
\end{proof}

Since any $(\ell;1)_{\mathbb{SID}}$-correcting code is, in particular, a $(\ell;1)_{\mathbb{S}}$-correcting code, we have $A^\mathbb{SID}_n(\ell;1)\le A^\mathbb{S}_n(\ell;1).$
Recall that Construction~\ref{TP_sub} for an $(\ell;1)_{\mathbb{SID}}$-correcting code achieves a code of size 
$2^{n\ell-\lceil \log_2(\ell+1)\rceil}$.
Since
$\ell+1 \le 2^{\lceil \log_2(\ell+1)\rceil} < 2(\ell+1),$
the size of our construction is within a multiplicative factor of at most $2$ of the sphere-packing upper bound, which corresponds to a gap of at most one redundancy bit.

%This proves the optimality of Construction \ref{TP_sub}. \ey{strictly optimal?}

\bibliographystyle{IEEEtran}
\bibliography{references}

\end{document}